\tikzset{commutative diagrams/kl/.style={rightsquigarrow}}
\newcounter{sarrow}
\newcommand\xrsquigarrow[1]{%
\stepcounter{sarrow}%
\mathrel{\begin{tikzpicture}[baseline= {( $ (current bounding box.south) + (0,-0.3ex) $ )}]
\node[inner sep=1ex] (\thesarrow) {$\scriptstyle #1$};
\path[draw,<-,decorate,
  decoration={zigzag,amplitude=0.7pt,segment length=1.2mm,pre=lineto,pre length=4pt}] 
    (\thesarrow.south east) -- (\thesarrow.south west);
\end{tikzpicture}}%
}
\definecolor{mycolour1}{HTML}{C9D5AE}
\definecolor{mycolour2}{HTML}{AECAD5}
\definecolor{mycolour3}{HTML}{D5AECA}
\newcommand{\opens}{\mathcal O}
\newcommand{\sets}{\ensuremath{\mathsf{Set}}}
\newcommand{\inj}{\ensuremath{\mathsf{Inj}}}
\newcommand{\cat}[1]{\ensuremath{\mathsf{#1}}}
\newcommand{\Tcom}{T^\dagger}   
\newcommand{\klcomp}{\mathbin{\circledcirc}}
\newcommand{\id}{\mathsf{id}}
\newcommand{\Kl}{\mathsf{Kl}}
\newcommand{\cop}{\mathsf{copy}}
\newcommand{\del}{\mathsf{del}}
\newcommand{\force}{\mathsf{force}}
\newcommand{\thunk}{\mathsf{thunk}}
\newcommand{\samp}{\mathsf{samp}}
\DeclareMathOperator{\ob}{ob}
\newcommand{\CComon}{\mathsf{CComon}}
\DeclareMathOperator{\colim}{colim}
\DeclareMathOperator{\im}{im}
\newcommand{\bB}{\ensuremath{\mathbb B}}
\newcommand{\bC}{\ensuremath{\mathbb C}}
\newcommand{\bK}{\ensuremath{\mathbb K}}
\newcommand{\bN}{\ensuremath{\mathbb N}}
\newcommand{\bR}{\ensuremath{\mathbb R}}
\newcommand{\bS}{\ensuremath{\mathbb S}}
\newcommand{\bT}{\ensuremath{\mathbb T}}
\newcommand{\MCC}{\ensuremath{\mathcal C}}
\newcommand{\MCK}{\ensuremath{\mathcal K}}
\newcommand{\MCM}{\ensuremath{\mathcal M}}
\newcommand{\MCP}{\ensuremath{\mathcal P}}
  \theoremstyle{acmdefinition}
  \newtheorem{remark}[theorem]{Remark}
\title{Probability monads with submonads of deterministic states --  Extended version}
\author{Sean Moss}
\affiliation{%
  \institution{University of Oxford}
  \department{Department of Computer Science}
  \city{Oxford}
  \country{UK}}
\email{sean.moss@cs.ox.ac.uk}
\author{Paolo Perrone}
\affiliation{%
  \institution{University of Oxford}
  \department{Department of Computer Science}
  \city{Oxford}
  \country{UK}}
\email{paolo.perrone@cs.ox.ac.uk}
\begin{document}

\begin{abstract}
Probability theory can be studied synthetically as the computational effect embodied by a commutative monad. In the recently proposed Markov categories, one works with an abstraction of the Kleisli category and then defines deterministic morphisms equationally in terms of copying and discarding. The resulting difference between `pure' and `deterministic' leads us to investigate the `sober' objects for a probability monad, for which the two concepts coincide. We propose natural conditions on a probability monad which allow us to identify the sober objects and define an idempotent sobrification functor. Our framework applies to many examples of interest, including the Giry monad on measurable spaces, and allows us to sharpen a previously given version of de Finetti's theorem for Markov categories.
\end{abstract}

\maketitle

This is an extended version of the paper accepted for the Logic In Computer Science (LICS) conference 2022. 
In this document we include more mathematical details, including all the proofs, of the statements and constructions given in the published version.

\paragraph{About citing this work.} 
All the definitions, propositions, and theorems appearing in the published version also appear here, with the same numbering as in the published version. There is one result here, \Cref{gfg}, not present in the published version.
The numbering of particular equations is however inevitably different between the two versions. 
Because of this, if future readers need to refer to any of the equations contained here, we recommend them to refer to the corresponding definition or theorem instead.

\section{Introduction}

This paper is about different models of `abstract' or `synthetic' probability theories.
Such theories talk about both \emph{deterministic quantities} and \emph{random quantities} (i.e.\ \emph{random variables}).
The difference is analogous to the distinction between \emph{values} and \emph{computations} in the semantics of programming languages.
Indeed, particular non-standard models of synthetic probability have proved useful in applications to probabilistic programming:
\begin{itemize}
\item Quasi-Borel spaces (\cite{heunen-kammar-staton-yang-a-convenient-category-for-higher-order-probability-theory}, \cite{scibior-et-al-denotational-validation-of-higher-order-bayesian-inference}) model Kock's synthetic measure theory \cite{kock-commutative-monads-as-a-theory-of-distributions}.
  Unlike traditional foundations for probability in measurable spaces, they are well-suited to higher-order data.
\item While naive handling of conditional probabilities can lead to paradoxes \cite{jacobs-paradoxes-of-probabilistic-programming}, it was shown in \cite{stein-staton-compositional-semantics-for-probabilistic-programs-with-exact-conditioning} that in more restrictive models of probability `exact conditioning' can be given a consistent meaning.
  Fritz's Markov categories \cite{fritz2019synthetic} were used to formulate the result.
\end{itemize}

A useful way to present a model of probability is with a \emph{commutative monad} \cite{kock-commutative-monads-as-a-theory-of-distributions}.
This gives the link to the semantics of more general programming effects, since commutative monads are a special case of the \emph{strong monads} used by Moggi for the semantics of call-by-value languages \cite{moggi-notions-of-computation-and-monads}.
A potentially strange aspect of monadic semantics is that it effectively gives both values and computations separately, with no need for one to be a subset of the other.
In terms of probability, we are led to ask whether the deterministic quantities can be characterized as certain well-behaved random quantities.
Two useful criteria, relevant to both probability and more general computation are the following, described informally.
\begin{enumerate}
\item \emph{Discardable}: A computation that has negligible effect if its result is not used, so may be safely discarded.
\item \emph{Copyable}: A computation which can be run once and have its output used twice instead of being run twice.
\end{enumerate}
Unlike monadic semantics, in a Markov category only the random quantities are explicitly given.
All the quantities in a Markov category are discardable, and then deterministic quantities are \emph{defined} to be those that are also copyable.

Having deterministic quantities characterized equationally has enabled many interesting results from traditional probability theory to be expressed and proved synthetically.
However, it is often the case that we wish to preserve some connection with the `pure' quantities, i.e.\ those morphisms in the original category hosting a monad for probability.
This is because that category is where we have explicit descriptions of the objects.
The goal of this paper is to elucidate such a connection for several examples of interest.
More specifically, we address whether probability monads of interest admit `sobrification' submonads, i.e.\ a universal way of replacing each object with one for which pure quantity coincides with deterministic quantity.

In the remainder of this introduction, we provide some informal explanation of the background material on probability monads and a high-level picture of the development in this paper.
Our main contribution rests in packaging up properties of certain probability monads into the \emph{observationality} and \emph{S-observationality} conditions, showing that these are indeed satisfied by examples and interest, and showing that these conditions abstractly imply the desired results on sobrification.

\subsection{Probability monads}

The canonical mathematical model of probability theory is in the category $\cat{Meas}$ of measurable spaces.
A measurable space is a pair $(X,\Sigma_X)$ where $X$ is a set and $\Sigma_X \subseteq \MCP(X)$ is a collection of subsets of $X$ containing $\emptyset$ and $X$ and closed under countable unions and complements.
Morphisms $(X,\Sigma_X) \to (Y,\Sigma_Y)$, or `measurable functions', are functions $f : X \to Y$ such that $f^{-1}(E) \in \Sigma_X$ for every $E \in \Sigma_Y$.
It is necessary to consider measurable spaces rather than just sets because there are foundational problems with constructing probability distributions that assign probability to \emph{every} subset of a given set $X$.
Instead, we equip $X$ with a family $\Sigma_X$ of `measurable sets', and require a \emph{probability measure} to be a function $m : \Sigma_X \to [0,1]$ satisfying $m(X) = 1$, $m(\emptyset) = 0$ and countable additivity.

Interestingly, the set $PX = P(X,\Sigma_X)$ of probability measures on a measurable space $(X,\Sigma_X)$ can itself be considered as a measurable space.
In fact, $P$ is the functor part of the \emph{Giry monad} on $\cat{Meas}$ (see \Cref{girymonad}). This allows us to consider `probability measures on the set of probability measures'.
The Giry monad is a strong monad, and indeed a commutative monad, so following Moggi \cite{moggi-notions-of-computation-and-monads} we can use it to interpret a first-order call-by-value language.

The basic setting in this paper is that of a category $\bC$ with finite products and a monad $T$ on it.
The idea is that a monad includes an assignment $A \mapsto TA$, where $TA$ is the object of `distributions' or `measures' on $A$.
Hence, for any pair of objects $A,X \in \ob\bC$, we can consider morphisms $A \to X$ to be `pure' functions and morphisms $A \to TX$ to be `stochastic functions'.
In the language of computation, the latter would be a computation that produces an $X$.

\subsection{Thunkable morphisms}

Besides the aforementioned properties of copyability and discardability, there is an additional property that makes a monadic computation look `ordinary': \emph{thunkability}.
A \emph{thunk} is a computation that has been `frozen'.
In probability, the `thunk' of a state $x$ is a `Dirac delta' probability distribution $\delta_x$ which assigns probability $1$ to $x$ and $0$ to everything else.
In other words, sampling from $\delta_x$ almost surely returns $x$.
Informally, a program $M : A$ is \emph{thunkable} if it satisfies the equation
\begin{displaymath}
  \lambda \_. M = \mathrm{let}\, x \leftarrow M \, \mathrm{in}\, \lambda \_. x : 1 \to A.
\end{displaymath}
In terms of probability, we think of $M$ as a distribution, the difference between the two sides is that the left-hand side returns a thunk that samples anew from $M$ every time it is run, but the right hand side creates a thunk by sampling from $M$ once and for all then wrapping the result.

From the point of view of probability theory, thunkable morphisms are those that commute with forming Dirac deltas. This can be seen as related to determinism, since any stochastic map which `spreads' the mass of a measure, from a single point to several ones, cannot possibly commute with forming deltas.
(See also \cite[Remark 3.11]{fritz2020representable} for additional context.)

It therefore seems that thunkability is yet another property that sets pure computations apart from the other ones, and so it is interesting to study the relationship between purity, thunkability, copyability, and discardability \cite{fuhrmann-direct-models-of-the-computational-lambda-calculus,fuhrmann-varieties-of-effects}. We recap these properties and their relationships in \Cref{structured}.
We show that in nice situations thunkable morphisms are encoded by a submonad of the original monad, often idempotent, which one can think of as measuring the extent to which a generic thunkable morphism fails to be pure.

\subsection{Observations and de Finetti's theorem}

In the practice of probabilistic programming, as well as in statistics, one is sometimes given two random variables and has to test whether they follow the same distribution or not. The problem is not an easy one. 
In the best case one can draw independent samples from them, and compare the (random) sequences obtained by the repeated draws. 
This idea is also reflected by the famous \emph{de Finetti theorem}~\cite{de-finetti-foresight-its-logical-laws-its-subjective-sources}, which roughly says that random probability measures correspond bijectively to exchangeable random sequences.
In this work we make mathematically precise, in terms of monads, the intuition that \emph{random variables are tested by repeated draws}, or more generally, that in some contexts, effectful computations can be compared by taking repeated independent runs.
In terms of a program $M : 1 \to (1 \to A)$, we test with the program contexts
\begin{align*}
  \MCC_n[M] \coloneqq{} & \mathrm{let}\, z \leftarrow M()\, \mathrm{in}\, \mathrm{let}\, x_1 \leftarrow z()\, \mathrm{in}\,\ldots \\
  & \mathrm{let}\, x_n \leftarrow z()\, \mathrm{in}\, \mathrm{return}\, (x_1,\ldots,x_n)
\end{align*}
where $\MCC_n[M] : A^n$, for $n \in \bN$.
To see why repeated sampling is necessary, consider the following programs
\begin{mathpar}
  M_1 \coloneqq \lambda \_.\lambda \_.\mathrm{or}(\mathrm{true},\mathrm{false})
  \and
  M_2 \coloneqq \lambda \_.\mathrm{or}(\lambda \_.\mathrm{true} , \lambda \_.\mathrm{false})
\end{mathpar}
of type $1 \to (1 \to \mathrm{bool})$ where $\mathrm{or}$ is a non-deterministic choice.
Then $\MCC_1[M_1]$ and $\MCC_1[M_2]$ can each evaluate to both $\mathrm{true}$ and $\mathrm{false}$, but $\MCC_2[M_1]$ can evaluate to only two of the four possibilities and $\MCC_2[M_2]$ can evaluate to any of them.

We call monads that exhibit this property \emph{observational}.
While this property is often considered typical of probabilistic contexts, somewhat surprisingly it can also hold for monads which are not strictly about probability. For example, the lower Vietoris monad (or Hoare powerdomain) is observational (\Cref{Hobserv}), and it generally does not encode randomness, but rather, nondeterminism. 
For observational monads, as we prove in \Cref{main}, every thunkable morphism is deterministic.

\subsection{Outline}

In \Cref{monads} we recall the basic concepts of monad theory, with a view on the probability case. We in particular look at the \emph{equalizing requirement}, which is one of the less known concepts which is of great relevance for this work, and we sketch how a monad interacts with the products of a monoidal category, in order to form `joint states'. 

In \Cref{structured} we look in detail at the structures that select the different versions of effectful (in particular, random) and noneffectful computation. 
We recall (\Cref{inclusions}) that every pure morphism is thunkable and every thunkable morphism is copyable and discardable, but not the other way around (as counterexamples show).

In \Cref{submonad} we construct for every monad $T$ a submonad $D$ whose Kleisli morphisms are the thunkable morphisms of $T$. 
In \Cref{sober} we define sober objects as those for which every thunkable morphism is pure, and look at the relationship with the submonad $D$. We show that in several cases, such as for the Giry monad, the submonad $D$ is idempotent, and that sober objects generalize sober topological spaces.

In \Cref{observational} we define observational monads, those for which computations can be tested by repeated independent runs. We look at the connection with ground types, the ones that are directly observable (e.g.~by the user or experimenter), and we give technical conditions to show observationality of monads.
In \Cref{detmaps} we prove what could be considered the main result of this work (\Cref{main}), namely that for observational monads, deterministic morphisms are the same as thunkable morphisms. Therefore, in that case the submonad $D$ equivalently encodes deterministic morphisms.

In \Cref{definetti} we connect the notion of observationality with de Finetti's theorem. In particular, we show that for observational monad one can sharpen the known version of the synthetic de Finetti theorem for Markov categories \cite{definetti-markov}.

Finally, in \Cref{examples} we prove observationality for three apparently very different monads: the Giry monad, the lower Vietoris monad of nondeterminism, and the  
monad of name generation. 
We give concluding remarks in \Cref{conclusion}.

\section{Background on monads}\label{monads}

Recall \cite{maclane-cwm1998} that a \emph{monad} on a category $\bC$ is a triple $\bT = (T,\eta,\mu)$ (subsequently also just denoted by $T$) where
\begin{enumerate}
\item $T$ is a functor $\bC \to \bC$,
\item $\eta$ is a natural transformation $\id \Rightarrow T$,
\item $\mu$ is a natural transformation $TT \Rightarrow T$,
\end{enumerate}
satisfying $\mu_A \circ \eta_{TA} = 1_{TA} = \mu_A \circ T\eta_A$ and $\mu_A\circ\mu_{TA}=\mu_A\circ T\mu_A$ for all $A \in \bC$.

Monads are used in denotational semantics to model a distinction between \emph{pure values} and \emph{effectful computations} \cite{moggi-notions-of-computation-and-monads}.
If morphisms $A \to X$ are ordinary `$X$-valued functions', then morphisms $A \to TX$ are `$X$-producing computations'.
For the purposes of this work, we are mainly interested in those monads involving probability, i.e.~for which a morphism $A\to TX$ can be seen as a \emph{stochastic} map from $A$ to $X$, involving random chance.

\begin{example}\label{distmonad}
  Consider the category $\sets$ whose objects are sets and morphisms are functions.
  The \emph{distribution monad} on $\sets$ is the monad $T$ with $TX = \{ \pi \in [0,1]^X : \sum_{x \in X} \pi(x) = 1 \}$, $\eta_X(x) = \lambda x'.\llbracket x = x_0 \rrbracket$ (i.e.~it forms a ``delta at $x$'') and
  \begin{displaymath}
    \mu_X(\rho)(x) = \sum_{\pi \in TX} \rho(\pi)\times\pi(x).
  \end{displaymath}
\end{example}

\begin{example}\label{girymonad}
 Consider the category $\cat{Meas}$ whose objects are measurable sets and morphisms are measurable functions. 
 The \emph{Giry monad} $P$ \cite{giry} consists of
 \begin{itemize}
  \item The functor $P$ assigning to each measurable space $X$ the set $PX$ of probability measures over $X$, equipped with the coarsest $\sigma$-algebra which makes the evaluation of measures measurable;
  \item The natural transformation of components $\eta:X\to PX$ assigning to each point $x\in X$ the ``Dirac delta'' measure $\delta_x$, such that for every measurable $A\subseteq X$, $\delta_x(A) = 1$ if $x \in A$ and $\delta_x(A) = 0$ otherwise.
  \item The natural transformation of components $\mu:PPX\to PX$ which, analogously to the case of the distribution monad, assigns to each measure $\rho\in PPX$ the ``mixture'' measure $\mu(\rho)$, such that for every measurable $A\subseteq X$, 
  $$
  \mu(\rho)(A) = \int_{PX} p(A) \, \mu(d\rho) .
  $$
  
 \end{itemize}
 
 On $\cat{Meas}$ we can analogously define the monad $M$ of \emph{subprobability} measures, where now instead of having $p(X)=1$ we only require $0\le p(X) \le 1$.
 
 For more information, we refer to \cite{giry}. For an introduction to the concepts, see \cite[Chapter~1]{thesispaolo}, \cite[Section~6]{fritz2018monads} and \cite{panangaden-labelled-markov-processes}.
\end{example}

In this view of a monad, $\eta_A : A \to TA$ is the computation which, given $a \in A$, just returns $a$.
The rest of the monad data tells us how to sequence computations together.
The \emph{Kleisli category} $\Kl(T)$ is the category with the same objects as $\bC$ but homsets the $\Kl(T)(A,B)$ is in bijection with $\bC(A,TB)$.
We explicitly denote this bijection with
\begin{mathpar}
  (-)^\sharp : \Kl(T)(A,B) \to \bC(A,TB)
  \and
  (-)^\flat : \bC(A,TB) \to \Kl(T)(A,B).
\end{mathpar}
The composition in $\Kl(T)(A,B)$, denoted by $\klcomp$, is given by
\begin{displaymath}
  (g \klcomp f)^\sharp = \mu \circ T(g^\sharp) \circ f^\sharp
\end{displaymath}
with the identity maps $(1_A)^\sharp = \eta_A$.
Since the morphisms of $\Kl(T)$ can easily be confused with those of $\bC$, in diagrams we use an ordinary arrow $f : A \to TB$ when considering morphisms in $\bC$ and a wavy arrow $f^\flat : A \rightsquigarrow B$ for morphisms in $\Kl(T)$.
The bijection above gives an adjunction between $\bC$ and $\Kl(T)$ called the \emph{Kleisli adjunction}.

\begin{example}
 For the distribution monad on \sets, a Kleisli morphism is a function $X \to TY$, i.e.~an $X$-indexed family of probability distributions on $Y$, or a function on $X$ whose output is random, also called a \emph{finite probability kernel}. We can denote its entries by $k(y|x)$, interpreted as the conditional probability of obtaining output $y$ from the input $x$. 
 
 The Kleisli composition of $k:X \to TY$ and $h:Y\to TZ$ returns the kernel $h\klcomp k:X\to TZ$ given by
 $$
 h\klcomp k (z|x) = \sum_{y\in Y} h(z|y) \, h(y|x) .
 $$
 also known as the \emph{Chapman-Kolmogorov} composition of probability kernels.
\end{example}

\begin{example}
 We denote the Kleisli categories of $P$ and $M$ by $\cat{Stoch}$ and $\cat{SubStoch}$, respectively.
 Given measurable spaces $X$ and $Y$, a morphism $k:X\rightsquigarrow Y$ of $\cat{Stoch}$ is a \emph{Markov kernel} (or \emph{stochastic map}) $k$ from $X$ to $Y$, i.e.~either a measurable function $X\to PY$,or equivalently a map 
 $$
 \begin{tikzcd}[row sep=0]
  X\times\Sigma_Y\ar{r}{k} & {[0,1]} \\
  (x,B) \ar[mapsto]{r} & k(B|x)
 \end{tikzcd}
 $$ such that
 \begin{itemize}
  \item for each measurable subset $B\subseteq Y$, the assignment $x\mapsto k(B|x)$ is measurable;
  \item for each $x\in X$, the assignment $B\mapsto k(B|x)$ is a subprobability measure (i.e.~$k(Y|x) = 1$). 
 \end{itemize}
 The Kleisli composition is the continuous analogue of the Chapman-Kolmogorov formula, it is given by the integral
 $$
 h\klcomp k(C|x) = \int_{Y} h(C|y)\,k(dy|x) .
 $$
 
 The morphisms of $\cat{SubStoch}$, called \emph{substochastic maps}, are defined similarly, except that $0 \le k(Y|x) \le 1$.
\end{example}

In some sense, the unit of the monad allows to transport morphisms of the base category into the Kleisli category.:
\begin{definition}\label{defpure}
 A morphism $f:X\rightsquigarrow Y$ of $\Kl(T)$ is called \emph{pure} if its counterpart $f^\sharp:X\to TY$ is in the form 
$$
\begin{tikzcd}
X \ar{r}{g} & Y \ar{r}{\eta} & TY
\end{tikzcd}
$$
for some $g:X\to Y$ of $\bC$. We call $f$ \emph{uniquely pure} if $f^\sharp$ can be written as $\eta\circ g$ for a unique $g$.
\end{definition}

Pure morphisms are, in some sense, those that ``come from the base category'', or ``do not really use the monad''. 

\begin{example}
For the Giry monad of probability measures, a pure 
morphism $f:X\rightsquigarrow Y$ is a kernel assigning to each point of $x$ a Dirac measure on $Y$, in a measurable way. So, in some sense, it is simply a measurable function $X\to Y$. Note that, if the $\sigma$-algebra of $Y$ does not separate points, different measurable functions $X\to Y$ might define the same kernel, and so in general a pure morphism $f:X\rightsquigarrow Y$ is not uniquely pure. An example of that will given in \Cref{notinjective}.
\end{example}

\subsection{Unit fork and equalizing requirement}

Recall that a \emph{fork} is a diagram 
$$
\begin{tikzcd}
 A \ar{r}{f} & B \ar[shift left]{r}{g} \ar[shift right, swap]{r}{h} & C
\end{tikzcd}
$$
such that $g\circ f = h\circ f$ (but not necessarily $g=h$). One can view an equalizer as a universal fork.

Given a monad $T$, for each object $X$, the unit of the monad forms the following fork,
\begin{equation}\label{unitfork}
 \begin{tikzcd}
  X \ar{r}{\eta} & TX \ar[shift left]{r}{\eta} \ar[shift right]{r}[swap]{T\eta} & TTX
 \end{tikzcd}
\end{equation}
which is indeed a fork by naturality of $\eta$. We call the diagram \eqref{unitfork} the \emph{unit fork} at $X$. 

The monads for which the unit fork is an equalizer for every $X$ are said to satisfy the \emph{equalizing requirement}
\cite[Section~4]{moggi-notions-of-computation-and-monads}.

\begin{example}
 The distribution monad on $\sets$ satisfies the equalizing requirement. 
 Indeed, given $p\in TX$, the two distributions $\eta(p)$ and $T\eta(p)\in TTX$ are respectively, a delta peaked at $p$, and a convex combination of deltas at points $x$ with coefficients $p(x)$. These are equal if and only if $p$ is itself a delta at some point $x$. 
\end{example}

\begin{example}\label{notinjective}
 (This example comes from \cite[Example~10.5]{fritz2019synthetic}.)
 The Giry monad on $\cat{Meas}$ does not satisfy the equalizing requirement. 
 Let $X$ be the 2-point space $\{x,x'\}$, equipped with the codiscrete sigma-algebra (i.e.~the only measurable sets are the empty set and $X$ itself). 
 Then the measures $\delta_x$ and $\delta_{x'}$ are equal, even if the points $x$ and $x'$ are not.
 As the unit $\eta$ (i.e.~$\delta$) is not injective, it cannot be an equalizer. 
\end{example}

\begin{remark}\label{TXeq}
The unit fork for objects in the form $TX$ is always an equalizer, in fact a split one.
$$
\begin{tikzcd}
 TX \ar{r}{\eta} & TTX \ar[bend left]{l}{\mu} \ar[shift left]{r}{\eta} \ar[shift right]{r}[swap]{T\eta} & TTTX \ar[shift left, bend left]{l}{T\mu} 
\end{tikzcd}
$$
\end{remark}

\subsection{Monads on monoidal categories}

A \emph{symmetric monoidal category} (SMC) is a category equipped with a ``tensor product'', i.e.~a binary functor $\otimes:\bC\times\bC\to\bC$, and a ``unit'' object $I$, together with isomorphisms 
\begin{mathpar}
  X\otimes I \cong X \cong I\otimes X ,
  \and
  (X\otimes Y) \otimes Z \cong X \otimes (Y\otimes Z) ,
  \and
  \tau_{X,Y} : X\otimes Y \cong Y\otimes X
\end{mathpar}
satisfying appropriate coherence conditions and $\tau_{X,Y} \circ \tau_{Y,X} = 1$.
By a coherence theorem we can calculate as though the first three are actually \emph{identities} --- see \cite{maclane-cwm1998} for precise definitions and details.
The interpretation is that given objects $A,B,C,D$ and morphisms $f:A\to B$ and $g:X\to Y$, we can form new objects $A\otimes X$ and $B\otimes Y$ of ``joint states'', and the morphism $f\otimes g: A\otimes X\to B\otimes Y$, which as a process it consists of ``executing $f$ and $g$ independently, in parallel''.

The interaction between monads and the monoidal structure of a category are of interest for probability theory: in general the probability of a product is not the product of the probabilities, and this difference encodes correlation and other statistical interaction. 
See \cite{fritzperrone2018bimonoidal} for more on this. 

So let $(\bC, \otimes, I)$ be an SMC.
A \emph{monoidal monad} is a monad $T$ together with a natural transformation of components $\nabla: TA \otimes TB\to T(A\otimes B)$ and a morphism $I\to TI$ satisfying associativity, unitality, and compatibility with the monad structure. 
The monad is \emph{symmetric monoidal} if the map $\nabla$ is compatible with permutation of the factors.
(See for example \cite[Appendix~A]{fritzperrone2018bimonoidal} for the detailed definition.)

A more general interaction between a monad and the monoidal structure of a category is the notion of \emph{strength}, widely used in theoretical computer science at least since Moggi \cite{moggi-notions-of-computation-and-monads}. 
A strength consists of a natural transformation of components $\sigma:X\otimes TY \to T(X\otimes Y)$, satisfying suitable consistency conditions (see the source above for the details). It intuitively turns a computation in $Y$ paired with a value of $X$ into a computation in $X$ and $Y$, which is in some sense ``trivial in $X$''.
A strength can always be obtained from a monoidal structure using the unit as $\nabla\circ(\eta\otimes\id):X\otimes TY \to T(X\otimes Y)$. 
Conversely, given a strength, one can obtain a monoidal structure provided that the strength satisfies a particular \emph{commutativity} condition (one speaks of a \emph{commutative monad}). 
Indeed, a symmetric monoidal structure for a monad is equivalent to a commutative strength, see for example \cite[Appendix~C]{fritz2019support} for details. 
Moreover, in that case $\Kl(T)$ is a monoidal category as well, with the tensor product induced by the one on the base category.

\section{Categories of structured objects}\label{structured}

By this we mean categories where the objects have structure and the morphisms do not necessarily preserve all of this structure.
This paper will essentially be the study of the Kleisli adjunction induced by a commutative monad on a cartesian monoidal category, but we recall here the relation to various abstractions.

\subsection{Copy-discard structure}

\begin{definition}
  Let $(\MCK,\otimes,I)$ be a symmetric monoidal category.
  A \emph{comonoid} in $\MCK$ is a triple $(X,c,d)$ where $X \in \MCK$, $c : X \to X \otimes X$, $d : X \to I$ and these satisfy the following equations.
  \begin{mathpar}
    (d \otimes 1_X) \circ c = 1_X = (1_X \otimes d) \circ c
    \and
    (c \otimes 1_X) \circ c = (1_X \otimes c) \circ c
  \end{mathpar}
  A comonoid is \emph{cocommutative} (or \emph{commutative}, for short) if in addition $c = \tau_{X,X} \circ c$.
  The category of commutative comonoids in $\MCK$, written $\CComon(\MCK,\otimes,I) = \CComon(\MCK)$, has as morphisms $(X,c,d) \to (X',c',d')$ those maps $f : X \to X'$ in $\MCK$ such that $d' \circ f = d$ and $c' \circ f = (f \otimes f) \circ c$.
\end{definition}

\begin{definition}
  A \emph{copy-delete} or \emph{copy-discard (CD) category}, also called a \emph{garbage-share (gs) monoidal category}, is a symmetric monoidal category $(\MCK,\otimes,I)$ together with a specified section of the function
  \begin{displaymath}
    \ob \CComon(\MCK,\otimes,I) \to \ob \MCK
  \end{displaymath} 
  mapping a commutative comonoid in $\MCK$ to its underlying object.
  In other words, each object $X \in \MCK$ is equipped with maps
  $\cop_X : X \to X \otimes X$ and $\del_X : X \to I$
  making it into a commutative comonoid.
\end{definition}

CD categories were first defined (in strict form, and under the name ``gs-monoidal categories'') in \cite{gadducci-thesis}, and rediscovered independently several times. See \cite[Remark~2.2]{freeGScat} (and references therein) for a more detailed history of the subject.
While every object of a CD category $\MCK$ is a comonoid, this does not make $\MCK$ a subcategory of $\CComon(\MCK)$ since the morphisms of $\MCK$ do not have to respect the comonoid structures.
This allows us to consider subclasses of morphisms in $\MCK$ which do respect them to various degrees.

\begin{example}
  Any cartesian monoidal category is a CD category in an essentially unique way. The copy map is given by the diagonal $X\to X\times X$, and the discard map is given by the unique map $X\to 1$.
\end{example}

\begin{example}
  More generally, the Kleisli category of a commutative monad $T$ on a cartesian monoidal category $\bC$ has a canonical CD structure. 
  The copy and delete structures are inherited by those of $\bC$. 
\end{example}

\begin{example}
 The Kleisli categories of $P$ and $M$, which we called $\cat{Stoch}$ and $\cat{SubStoch}$, have the following copy and discard maps.
 $$
  \begin{tikzcd}[row sep=0]
   X \ar{r} & M(X\times X) \\
   x \ar[mapsto]{r} & \delta_{(x,x)} 
  \end{tikzcd}
 \qquad 
  \begin{tikzcd}[row sep=0]
   X \ar{r} & M(1)\cong {[0,1]} \\
   x \ar[mapsto]{r} & 1_{\phantom{(x)}}
  \end{tikzcd}
 $$
 where instead $P(1)\cong 1$ for the Giry monad.
 The CD structure of $\cat{Stoch}$ has been studied in detail in \cite{chojacobs2019strings} and \cite{fritz2019synthetic}. 
\end{example}

\begin{example}
 A seemingly different example of CD category is a full subcategory of commutative comonoid objects in a symmetric monoidal category with all maps between them, not just the comonoid homomorphisms.
 Dually, we can also see this as the opposite category to a category of commutative monoid objects, for example rings or algebras.
\end{example}

\begin{definition}
 A morphism $f:X\to Y$ in a CD category is called
 \begin{itemize}
  \item \emph{copyable} if it commutes with the copy map;
  $$
  \begin{tikzcd}
   X \ar{r}{f} \ar{d}{\cop} & Y \ar{d}{\cop} \\
   X\otimes X \ar{r}{f\otimes f} & Y\otimes Y
  \end{tikzcd}
  $$
  
  \item \emph{discardable} or \emph{normalized} if it commutes with the discard map;
  $$
  \begin{tikzcd}[column sep=small]
   X \ar{rr}{f} \ar{dr}[swap]{\del} && Y \ar{dl}{\del} \\
   & I 
  \end{tikzcd}
  $$
  
  \item \emph{deterministic} if it is copyable and discardable.
 \end{itemize}
\end{definition}

A \emph{Markov category} \cite{fritz2019synthetic} is a CD category in which every morphism is normalized. 

\begin{example}
 In $\cat{SubStoch}$, a morphism $k:X\rightsquigarrow Y$ is 
 \begin{itemize}
  \item copyable if and only if for every $x\in X$, and for every measurable $B\subseteq Y$, 
 $$
 k(B|x) \in \{0, k(Y|x) \} ;
 $$
 \item normalized if and only if for every $x\in X$,
 $$
 k(Y|x) = 1 ;
 $$
 \item deterministic if and only if for every $x\in X$, and for every measurable $B\subseteq Y$, 
 $$
 k(B|x) \in \{0, 1 \} , \qquad k(Y|x) = 1 .
 $$
 \end{itemize}
 In other words, deterministic morphisms are the ones that are certain about whether any event (measurable subset $B\subseteq Y$) is going to happen (probability $1$), or not (probability $0$).
\end{example}

 Every morphism of $\cat{Stoch}$ is normalized, therefore it is a Markov category.
A canonical example of a deterministic morphism $1\rightsquigarrow Y$ of $\cat{SubStoch}$ (and $\cat{Stoch}$) is a Dirac delta $1\mapsto \delta_y$ at a point $y\in Y$. Not every deterministic morphism is in this form in general: 
\begin{example}\label{zeroone}
 Let $Y$ be the unit interval $[0,1]$, equipped with the countable-cocountable sigma-algebra (i.e.~the measurable sets are precisely the countable subsets and their complements). Then the assignment
$$
A \mapsto
\begin{cases}
 1 & \mbox{if } A \mbox{ is uncountable} \\
 0 & \mbox{if } A \mbox{ is countable}
\end{cases}
$$
is a deterministic morphism $1\rightsquigarrow Y$ of $\cat{SubStoch}$, and it cannot be written as $\delta_y$ for any $y\in Y$.
\end{example}

Note also that measures of the form $\delta_x$ do not always correspond to points bijectively, as \Cref{notinjective} shows.
One of the main purposes of this paper is, indeed, to study those deterministic kernels which are not (parametrized) Dirac deltas. More generally, to study those deterministic morphisms in a Kleisli category which do not come from morphisms of the base category (i.e.~are not pure, according to \Cref{defpure}). 

We conclude this section with a general remark.
The following conditions are equivalent for a CD category:
\begin{itemize}
 \item Every morphism is deterministic;
 \item The copy and discard maps are natural;
 \item The category is cartesian monoidal.
\end{itemize}
In some sense, one can view cartesian monoidal categories as a special case of CD categories where no randomness or nondeterminism is involved.

\subsection{Thunk-force structure}

As is a well-known fact in category theory, an adjunction between two categories gives rise to a monad on one and a comonad on the other.
Thus when $T$ is a monad on $\bC$, there is a comonad on $\Tcom$ on $\Kl(T)$.
On objects, $\Tcom X = TX$ and on morphisms $f : A \rightsquigarrow B$ we have
\begin{displaymath}
  (\Tcom f)^\sharp = TA \xrightarrow{(f^\sharp)^\dagger} TB \xrightarrow{\eta} TTB,
\end{displaymath}
where $(f^\sharp)^\dagger$ denotes the Kleisli extension of $f^\sharp$.
It is useful to describe the unit and counit in terms of some slightly richer structure.

\begin{definition}[\cite{fuhrmann-direct-models-of-the-computational-lambda-calculus}]\label{def:thunk-force-category}
  A \emph{thunk-force category} or \emph{abstract Kleisli category} is a category $\bK$ equipped with an endofunctor $L : \bK \to \bK$ and two families of maps
  \begin{mathpar}
    \thunk_A : A \to LA \and \force_A : LA \to A
  \end{mathpar}
  for $A \in \bK$ such that
  \begin{enumerate}
  \item $\force$ is a natural transformation $L \Rightarrow \id$,
  \item $\thunk_L$ is a natural transformation $L \Rightarrow LL$,
  \item $L(\thunk_A) \circ \thunk_A = \thunk_{LA} \circ \thunk_A$,
  \item $\force_A \circ \thunk_A = 1_A$,
  \item $L(\force_A) \circ \thunk_{LA} = 1_{LA}$.
  \end{enumerate}
\end{definition}
Note $\thunk$ is \emph{not} required to be natural in general.
It does follow that the endofunctor $L$ underlies a comonad with counit $\force : L \Rightarrow \id$ and comultiplication $\thunk_L : L \Rightarrow LL$.

\begin{example}\label{ex:kleisli-thunk-force}
  $\Kl(T)$ is canonically a thunk-force category. The endofunctor $L$ is given by the composite $\Tcom=F\circ G: \Kl(T)\to \bC \to \Kl(T)$, where $(F,G)$ are the functors of the Kleisli adjunction, and
  \begin{mathpar}
    (\thunk_A)^\sharp = A \xrightarrow\eta TA \xrightarrow\eta TTA
    \and
    (\force_A)^\sharp = TA \xrightarrow 1 TA.
  \end{mathpar}
  This suffices to describe the comonad structure of $\Tcom$.
\end{example}

In the context of categorical probability, i.e.~in \cite{fritz2020representable} and \cite{definetti-markov}, the morphism $\force$ is denoted by $\samp$.
This can be interpreted as a map taking a probability measure $p$ and returning a random element distributed according to $p$. 

\subsection{Thunkable morphisms}

In \cite{fuhrmann-direct-models-of-the-computational-lambda-calculus} it is shown that every thunk-force category has the form of \Cref{ex:kleisli-thunk-force}.
The crucial concept is that of \emph{thunkable} morphism, of which we recall here the basic definitions and constructions.

\begin{definition}
  A morphism $f : A \to B$ in a thunk-force category is \emph{thunkable} if the following diagram commutes.
  \begin{equation}\label{thunksquare}
  \begin{tikzcd}
   A \ar{r}{f} \ar{d}[swap]{\thunk_A} & B \ar{d}{\thunk_B} \\
   LA \ar{r}{Lf} & LB
  \end{tikzcd}
  \end{equation}
\end{definition}

The thunkable morphisms form a wide subcategory, $\bK_\thunk$.
Note that each map $\thunk_A$ is itself thunkable, by one of the axioms in \Cref{def:thunk-force-category}, as is any map of the form $Lf$, by naturality of $\thunk_L$.
The identity-on-objects inclusion $\bK_\thunk \hookrightarrow \bK$ has a right adjoint given by the factoring of $L : \bK \to \bK$ through $\bK_\thunk \hookrightarrow \bK$.
At the level of homsets the bijection is
\begin{mathpar}
  (A \xrightarrow f B) \mapsto (A \xrightarrow\thunk LA \xrightarrow{Lf} LB)
  \and
  (A \xrightarrow g LB) \mapsto (A \xrightarrow g LB \xrightarrow\force B).
\end{mathpar}
This adjunction gives rise to a monad on $\bK_\thunk$ whose Kleisli category constructed as in \Cref{ex:kleisli-thunk-force} is the original thunk-force category $\bK$.
In \cite{fuhrmann-direct-models-of-the-computational-lambda-calculus} it is further observed that in an appropriate sense this is the universal solution to the inverse problem of presenting a thunk-force category via a monad.

In \cite[Remark 3.11]{fritz2020representable} it was remarked that the unit of the monad is natural against morphisms of the base category, but not against generic Kleisli morphisms. From this perspective, we see thunkable morphisms as precisely those against which the unit of the monad is natural.

Thunkable morphisms of a Kleisli category can be characterized in terms of the base category as follows.
\begin{proposition}\label{thunkfork}
 A morphism $f:A\rightsquigarrow B$ of $\Kl(T)$ is thunkable if and only if its counterpart $f^\sharp:A\to TB$ in $\bC$ sits in the following fork.
 $$
 \begin{tikzcd}
   A \ar{r}{f^\sharp} & TB \ar[shift left]{r}{\eta} \ar[shift right]{r}[swap]{T\eta} & TTB
  \end{tikzcd}
 $$
\end{proposition}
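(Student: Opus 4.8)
The plan is to unwind both conditions via the Kleisli data and the comonad structure on $\Tcom$ described in \Cref{ex:kleisli-thunk-force}, and then observe that they say the same thing. First I would recall that $f$ is thunkable precisely when the square \eqref{thunksquare} commutes in $\Kl(T)$, i.e.\ when $\thunk_B \klcomp f = \Tcom f \klcomp \thunk_A$, and I would translate each side into $\bC$ by applying $(-)^\sharp$. Using $(\thunk_A)^\sharp = \eta_{TA}\circ\eta_A$, $(\thunk_B)^\sharp = \eta_{TB}\circ\eta_B$, the formula $(\Tcom f)^\sharp = \eta_{TB}\circ(f^\sharp)^\dagger$ with $(f^\sharp)^\dagger = \mu_B\circ T(f^\sharp)$, and the Kleisli composition rule $(g\klcomp h)^\sharp = \mu\circ T(g^\sharp)\circ h^\sharp$, both composites become explicit morphisms $A\to TTB$.

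The left-hand side, $(\thunk_B\klcomp f)^\sharp = \mu_{TB}\circ T(\eta_{TB}\circ\eta_B)\circ f^\sharp$, simplifies: by functoriality $T(\eta_{TB}\circ\eta_B) = T\eta_{TB}\circ T\eta_B$, and then $\mu_{TB}\circ T\eta_{TB} = 1_{TTB}$ by a monad law, leaving $T\eta_B\circ f^\sharp$. For the right-hand side, $(\Tcom f\klcomp\thunk_A)^\sharp = \mu_{TTB}\circ T\bigl(\eta_{TB}\circ(f^\sharp)^\dagger\bigr)\circ\eta_{TA}\circ\eta_A$. Pushing $\eta$'s through by naturality and collapsing $\mu\circ T\eta$-type redexes, this reduces to $\eta_{TB}\circ(f^\sharp)^\dagger\circ\eta_A = \eta_{TB}\circ\mu_B\circ T(f^\sharp)\circ\eta_A = \eta_{TB}\circ\mu_B\circ\eta_{TB}\circ f^\sharp$ (naturality of $\eta$ against $f^\sharp$), and $\mu_B\circ\eta_{TB} = 1_{TB}$, so we are left with $\eta_{TB}\circ f^\sharp$. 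Hence thunkability of $f$ is equivalent to the equation $T\eta_B\circ f^\sharp = \eta_{TB}\circ f^\sharp$ in $\bC$, which is exactly the assertion that $f^\sharp : A \to TB$ sits in the displayed fork $A \to TB \rightrightarrows TTB$ (with legs $\eta$ and $T\eta$).

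The computation is routine monad-law bookkeeping; the only place to be careful is keeping track of \emph{which} naturality square and \emph{which} of the two triangle identities ($\mu\circ\eta_T = 1$ versus $\mu\circ T\eta = 1$) is being used at each step, since the two composites collapse via the two different triangle laws. I would present the argument as a short chain of equalities for each side, perhaps with a small commuting diagram, rather than prose. An alternative, slicker route avoiding most of the calculation: invoke the remark recalled just above, that thunkable morphisms are exactly those against which the monad's unit is natural; naturality of $\eta\colon\bC(A,-)\Rightarrow$ (Kleisli hom) against $f^\flat$ says $\eta_B\bullet f = (\Tcom)$-shifted copy of $f\bullet\eta_A$, and spelling this out in $\bC$ gives the same fork condition. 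I expect the routine simplification of the right-hand composite — getting the nested $\eta$'s and the outer $\mu$ to cancel in the right order — to be the only mild obstacle, and it is purely mechanical.
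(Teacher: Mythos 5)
Your proposal is correct and follows essentially the same route as the paper's proof: both legs of the thunkability square are transported to $\bC$ via $(-)^\sharp$ and reduced by the monad laws to $T\eta_B\circ f^\sharp$ and $\eta_{TB}\circ f^\sharp$ respectively, which is exactly the stated fork condition. (One cosmetic slip: the multiplication appearing in $(\Tcom f\klcomp\thunk_A)^\sharp$ should be $\mu_{TB}$, not $\mu_{TTB}$; the subsequent reduction you give is the same as the paper's.)
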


\begin{proof}
  One way around square \eqref{thunksquare} is given by
  \begin{displaymath}
    Tf \circ \eta_{TA} \circ \eta_A = \eta_{TB} \circ f
  \end{displaymath}
  and the other one is
  \begin{displaymath}
    \mu_{TB} \circ T(\eta_{TB}) \circ T(\eta_B) \circ f = T(\eta_B) \circ f. \qedhere
  \end{displaymath}
\end{proof}

\subsection{Relationship between the different classes of maps}

Given a commutative monad on a cartesian monoidal category, we get both a copy-discard structure and a thunk-force structure canonically. 
The two structures interact in the following way. 

\begin{theorem}\label{inclusions}
 Let $T$ be a commutative monad on a cartesian monoidal category $\bC$. Consider its Kleisli category $\Kl(T)$ together with its canonical thunk-force and copy-delete structures.
  We have the following inclusions for morphisms of $\Kl(T)$,
  $$
  \mbox{pure } \subseteq \mbox{ thunkable } \subseteq \mbox{ deterministic } \subseteq \mbox{ all}.
  $$
\end{theorem}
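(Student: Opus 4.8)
The plan is to establish the two nontrivial inclusions separately, since "deterministic $\subseteq$ all" is vacuous. For "pure $\subseteq$ thunkable", I would simply invoke \Cref{thunkfork}: if $f : A \rightsquigarrow B$ is pure, then $f^\sharp = \eta_B \circ g$ for some $g : A \to B$ in $\bC$, and naturality of $\eta$ against $g$ immediately gives $\eta_{TB} \circ \eta_B \circ g = T\eta_B \circ \eta_B \circ g$, which is precisely the fork condition characterizing thunkability. (Alternatively, one can cite the remark that the unit is natural against morphisms of the base category, which is exactly what purity provides.)

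For "thunkable $\subseteq$ deterministic", the real content, I would argue that a thunkable $f : A \rightsquigarrow B$ is both copyable and discardable. Discardability is easy: every morphism in the Kleisli category of a monad on a cartesian base is discardable, since $\del_B = (\eta_1 \circ !_B)^\flat$ and the terminal arrow $!$ is dinatural; in fact $\del$ composed with any Kleisli map is forced to be $\del$ by terminality of $1$ in $\bC$ together with the way $\del$ is defined from the cartesian structure. For copyability I would use the characterization of \Cref{thunkfork} together with the explicit description of the canonical CD structure on $\Kl(T)$, where $\cop_B = (\eta_{B \times B} \circ \Delta_B)^\flat$. Unwinding $(\cop_B \klcomp f)^\sharp$ and $((f \otimes f) \klcomp \cop_A)^\sharp$ in $\bC$, both reduce to expressions built from $f^\sharp$, $\eta$, $\mu$, the commutative strength $\sigma$ (equivalently $\nabla$), and the diagonal; the key simplification is that the fork condition $\eta_{TB} \circ f^\sharp = T\eta_B \circ f^\sharp$ lets one "push $\eta$ past $f$", which is exactly what is needed to commute $f$ with copying. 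A cleaner route, which I would prefer, is to observe that thunkable morphisms are closed under the monoidal product and that $f$ being thunkable means precisely that $\eta$ is natural at $f$ (viewing $\Tcom$-structure as in \Cref{ex:kleisli-thunk-force}); since $\cop$ and $\del$ are pure morphisms, hence natural transformations when restricted to thunkable maps, naturality against $f$ gives the two comonoid-homomorphism squares directly.

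The main obstacle I expect is bookkeeping with the strength: the canonical CD structure on $\Kl(T)$ uses the commutative monad structure (the map $\nabla : TA \otimes TB \to T(A \otimes B)$) to define $f \otimes f$ in $\Kl(T)$, so verifying the copyability square honestly requires juggling the compatibility axioms between $\nabla$, $\eta$, $\mu$ and the diagonal. The conceptual shortcut — "thunkable $=$ unit is natural here, and $\cop,\del$ are natural transformations of the base category transported to $\Kl(T)$" — sidesteps most of this, but one still has to be careful that the comonoid equations in $\Kl(T)$ are exactly the naturality squares one wants, and that the monoidal product of two thunkable maps is thunkable (which follows since $\thunk$ is compatible with $\otimes$, or directly from \Cref{thunkfork} plus the monoidal monad axioms). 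I would present the conceptual argument as the main proof and relegate the strength computation to a remark or leave it to the reader.
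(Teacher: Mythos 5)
Your first inclusion (pure $\subseteq$ thunkable) is fine and matches the paper: naturality of $\eta$ against the underlying map $g$ gives exactly the fork of \Cref{thunkfork}. The problem is the inclusion thunkable $\subseteq$ deterministic, where your proposal has a genuine gap. First, the claim that ``every morphism in the Kleisli category of a monad on a cartesian base is discardable'' is false: discardability of $f$ amounts to $T!_B \circ f^\sharp = \eta_1 \circ !_A$, which is automatic only when $T$ is affine ($T1 \cong 1$). The paper's own examples refute your claim --- for the maybe monad discardable $=$ pure, and in $\cat{SubStoch}$ (the sub-Giry monad) discardable means normalized, $k(Y|x)=1$. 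Terminality of $1$ in $\bC$ does not make $1$ terminal in $\Kl(T)$. Thunkable maps \emph{are} discardable, but that requires an argument (e.g.\ postcompose the thunkability equation $\eta_{TB}\circ f^\sharp = T\eta_B\circ f^\sharp$ with $T!_{TB}$, or use the paper's route below). Second, your preferred ``cleaner route'' for copyability is circular: saying that $\cop$ and $\del$ are ``natural transformations when restricted to thunkable maps'' is literally the statement being proved (naturality of $\cop$ and $\del$ at $f$ \emph{is} copyability and discardability of $f$), and no mechanism is given for transferring naturality of $\thunk$ at $f$ to naturality of $\cop$ at $f$. Your fallback --- the direct computation with the strength/$\nabla$ --- is a viable strategy (it is essentially F\"uhrmann's), but you explicitly leave it to the reader, so as written the key inclusion is not proved.

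For comparison, the paper's argument is short and avoids the strength bookkeeping entirely: if $f$ is thunkable then $\thunk_B \klcomp f = \Tcom f \klcomp \thunk_A$, a composite of two pure (hence deterministic) morphisms; since $\thunk_B$ is itself deterministic and split monic (split by $\force_B$), the cancellation lemma (\Cref{gfg}) --- if $g\klcomp f$ and $g$ are discardable then so is $f$, and if moreover $g$ is split monic and both are copyable then so is $f$ --- yields that $f$ is deterministic. If you want to salvage your approach, either carry out the $\nabla$/strength computation in full, or adopt this cancellation argument; but the discardability step must in any case be replaced, since it currently rests on a false general claim.
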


Before proving the theorem, let's look at some examples.

\begin{example}
  Consider the `maybe' monad $TX = X+1$ on $\sets$.
  This satisfies the equalizer condition, so pure = thunkable.
  Every map is copyable, but discardable = pure.
  Therefore deterministic = thunkable.
\end{example}

\begin{example}
  The `read-only state' monad $TX = X \times X$ on $\sets$ satisfies the equalizer condition, so pure = thunkable.
  As observed in \cite{fuhrmann-direct-models-of-the-computational-lambda-calculus}, every map is copyable and discardable.
  Thus deterministic $\neq$ thunkable.
\end{example}

\begin{example}
 For the Giry monad on $\cat{Meas}$, the measure given in \Cref{zeroone} is a deterministic, but not pure morphism. However, as one can verify, applying $\eta$ and $T\eta$ one gets the same result, and so by \Cref{thunkfork}, the morphism is thunkable. So, pure $\neq$ thunkable.
 (As we will show in \Cref{Mobserv}, every deterministic morphism for the Giry monad is thunkable.)
\end{example}

Let's now prove the theorem. We make use of the following lemma, which holds in every CD category. 
\begin{lemma}[not present in the published version]\label{gfg}
  Let
  \begin{displaymath}
    \begin{tikzcd}
      A \ar{r}{f} & B \ar{r}{g} & C
    \end{tikzcd}
  \end{displaymath}
  be a composable pair of maps in any CD-category.
  \begin{itemize}
  \item [(i)] If $gf$ and $g$ are both discardable, then $f$ is discardable.
  \item [(ii)] If $gf$ and $g$ are both copyable and $g$ is split monic, then $f$ is copyable.
  \end{itemize}
\end{lemma}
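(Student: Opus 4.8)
The plan is to argue both parts purely diagrammatically using the comonoid axioms and naturality of nothing (since in a CD category copy and discard need not be natural). For part (i), I would start from the discardability of $gf$, i.e.\ $\del_C \circ g \circ f = \del_A$, and the discardability of $g$, i.e.\ $\del_C \circ g = \del_B$. Substituting the second into the first immediately gives $\del_B \circ f = \del_A$, which is exactly discardability of $f$. This is essentially a one-line computation and should present no obstacle.

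For part (ii), the plan is to use the copyability hypotheses $\cop_C \circ g \circ f = (g\otimes g)\circ (f\otimes f) \circ \cop_A$ and $\cop_C \circ g = (g\otimes g)\circ \cop_B$, together with the assumption that $g$ is split monic, say $r \circ g = 1_B$ for some $r : C \to B$. First I would rewrite the copyability of $gf$ using the copyability of $g$: we have
\begin{displaymath}
  (g\otimes g)\circ \cop_B \circ f = \cop_C \circ g \circ f = \cop_C \circ (gf) = (g\otimes g)\circ (f\otimes f)\circ \cop_A,
\end{displaymath}
so $(g\otimes g)\circ \cop_B \circ f = (g\otimes g)\circ (f\otimes f)\circ \cop_A$. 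Now I would postcompose both sides with $r\otimes r : C\otimes C \to B\otimes B$; since $(r\otimes r)\circ(g\otimes g) = (rg)\otimes(rg) = 1_B\otimes 1_B = 1_{B\otimes B}$ by functoriality of $\otimes$, this cancels the leading $g\otimes g$ on each side and yields $\cop_B \circ f = (f\otimes f)\circ \cop_A$, i.e.\ copyability of $f$.

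The only subtlety — and the place I would be most careful — is the passage from "$g$ split monic" to "$g\otimes g$ split monic", which is where the hypothesis is actually used: we need the same retraction $r$ to work on both tensor factors, and this is exactly what functoriality of $\otimes$ provides, giving $(r\otimes r)\circ (g\otimes g) = 1_{B\otimes B}$. Once that observation is in hand, both parts are short equational manipulations; no coherence-level bookkeeping is needed beyond treating the associators/unitors as identities, as the paper already licenses. I do not anticipate a genuine obstacle here — the lemma is elementary — but I would make sure to state explicitly that $r\otimes r$ is the chosen retraction so the cancellation step is unambiguous.
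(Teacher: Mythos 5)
Your proposal is correct and follows essentially the same argument as the paper: part (i) by substituting discardability of $g$ into that of $gf$, and part (ii) by rewriting both copyability hypotheses to get two maps equalized by $g\otimes g$ and then cancelling, the paper likewise noting that $g\otimes g$ is split monic (your explicit retraction $r\otimes r$ just spells out that observation).
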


\begin{proof}[Proof of \Cref{gfg}]
  For (i), we have
  \begin{align*}
    & A \xrightarrow f B \xrightarrow\del I \\
    ={} & A \xrightarrow f B \xrightarrow g C \xrightarrow\del I \\
    ={} & A \xrightarrow\del I.
  \end{align*}
  For (ii), we have
  \begin{align*}
    & A \xrightarrow f B \xrightarrow\cop B \otimes B \xrightarrow{g \otimes g} C \otimes C \\
    ={} & A \xrightarrow f B \xrightarrow g C \xrightarrow\cop C \otimes C \\
    ={} & A \xrightarrow\cop A \otimes A \xrightarrow{f \otimes f} B \otimes B \xrightarrow{g \otimes g} C \otimes C
  \end{align*}
  whence the result since $g \otimes g$ is split monic.
\end{proof}

\begin{proof}[Proof of \Cref{inclusions}]
 Let $f:X\rightsquigarrow Y$ be pure. Then there exists $g:X\to Y$ in $\bC$ such that $f^\sharp=\eta\circ g$, and so the fork of \Cref{thunkfork} can be decomposed through the unit fork \eqref{unitfork} as follows,
 $$
  \begin{tikzcd}
   X \ar{r}{g} & Y \ar{r}{\eta} & TY \ar[shift left]{r}{\eta} \ar[shift right]{r}[swap]{T\eta} & TTY
  \end{tikzcd}
 $$
 so $f$ is thunkable. 
 
 Let now $f$ be thunkable. Then $\thunk_Y\klcomp f=Tf\klcomp\thunk_X$ is equal to the composite of two pure (hence deterministic) maps.
  But as $\thunk_Y$ is deterministic and split monic, \Cref{gfg} says that $f$ is deterministic.
\end{proof}

\section{The submonad of thunkable morphisms}\label{submonad}

We now want to express the category $\Kl(T)_\thunk$ itself as the Kleisli category of a new monad, a submonad of $T$. 
Recall that every pure morphism is thunkable (\Cref{inclusions}).  
Therefore we have a functor 
$$
\begin{tikzcd}[row sep=0]
 \bC \ar{r}{i} & \Kl(T)_\thunk \\
 g \ar[mapsto]{r} & (\eta\circ g)^\flat .
\end{tikzcd}
$$
which takes $g:X\to Y$ of $\bC$ and gives the pure map $f:X\rightsquigarrow Y$ of $\Kl(T)$ such that, as above, $f^\sharp = \eta\circ g$.

Now with the help of \Cref{thunkfork}, let's construct a right-adjoint to this functor $i$.
We will suppose that the parallel pair $(\eta, T\eta)$ has an equalizer. This happens for example if $\bC$ has coreflexive equalizers (the common retraction of $\eta$ and $T\eta$ is $\mu:TTY\to TY$).
Choose an equalizer $(DY,\theta_Y)$.
$$
  \begin{tikzcd}
   DY \ar{r}{\theta_Y} & TY \ar[shift left]{r}{\eta} \ar[shift right]{r}[swap]{T\eta} & TTY
  \end{tikzcd}
  $$
  If $f:X\rightsquigarrow Y$ is thunkable then $\eta\circ f^\sharp = T\eta \circ f^\sharp$ (and conversely, this is \Cref{thunkfork}).
  Hence in this case there is a unique arrow $X\to DY$ making the triangle in the following diagram commute. 
$$
  \begin{tikzcd}
  X \ar{dr}{f^\sharp} \ar[dashed]{d} \\
   DY \ar{r}[swap]{\theta_Y} & TY \ar[shift left]{r}{\eta} \ar[shift right]{r}[swap]{T\eta} & TTY
  \end{tikzcd}
  $$
In other words, we have a bijection between thunkable morphisms $X\rightsquigarrow Y$ and arrows $X\to DY$ of $\bC$:
\begin{equation}\label{Dbijection}
\Kl(T)_\thunk(X,Y) \cong \bC(X,DY) .
\end{equation}

Since this bijection is obviously natural in $X$, we have the required adjunction.
As usual, there is a canonical extension of the object assignment $D$ to a functor $\Kl(T)_\thunk\to\bC$.
For completeness, we will describe its action of morphisms explicitly.
So now let $g:Y\rightsquigarrow Z$ be thunkable. 
Notice that, by construction, $\theta_Y:DY\to TY$ forms a fork with the pair $(\eta,T\eta)$, and so its counterpart $(\theta_Y)^\flat:DY\rightsquigarrow Y$ of $\Kl(T)$ is thunkable by \Cref{thunkfork}. 
The composition $g\circ (\theta_Y)^\flat:DY \rightsquigarrow Z$ is then thunkable too, and so its counterpart $DY\to TZ$ in $\bC$ fits into the following fork.
$$
\begin{tikzcd}
 DY \ar{r} & TZ \ar[shift left]{r}{\eta} \ar[shift right]{r}[swap]{T\eta} & TTZ
\end{tikzcd}
$$
Therefore by the universal property of the coequalizer $(DZ,\theta_Z)$, there is a unique morphism $DY\to DZ$ making the triangle in the following diagram commute. 
$$
  \begin{tikzcd}
  DY \ar{dr} \ar[dashed]{d} \\
   DZ \ar{r}[swap]{\theta_Z} & TZ \ar[shift left]{r}{\eta} \ar[shift right]{r}[swap]{T\eta} & TTZ
  \end{tikzcd}
  $$
This morphism $DY\to DZ$ is $Dg$. 
It is automatic but easy to check that $D$ preserves identities and composition, and the $\theta_Y$ assemble to form a natural transformation $D\Rightarrow T$. 

It is also plain that the bijection \eqref{Dbijection} is natural in $Y$, making $D$ a right-adjoint to $i$. 
We denote the resulting monad simply by $D$ (instead of $Di$).
Since $i$ is a left-adjoint bijective-on-objects functor, its codomain $\Kl(T)_\thunk$ is isomorphic to $\Kl(D)$.
Moreover, as $\Kl(D)\subseteq\Kl(T)$, the monad $D$ is a submonad of $T$. 

As we will see in the next section, $D$ is often idempotent.

\section{Sober objects and idempotence}\label{sober}

The original notion of \emph{sobriety} (for mathematical objects) refers to a property of topological spaces $(X,\opens(X))$.
Roughly speaking, a topological space is \emph{sober} if the existence and equality of its points is determined by the frame $\opens(X)$ of its open sets. 
On a sober topological space, a point can be uniquely identified by saying in which open sets it is contained, and conversely, any suitable consistent choice of open sets (called a \emph{completely prime filter}) corresponds to a point of the space. 
The notion of sober topological space has been linked to an equalizer of a continuation-style monad by several authors~\cite{taylor-sober-spaces-and-continuations,rosolini-equilogical-spaces-and-filter-spaces,bucalo-rosolini-sobriety-for-equilogical-spaces}. 
Our definition is less specific, and gives a notion of sobriety which in general depends on the monad.

Categorically, the points of a topological space $X$ correspond to `maps in' $1 \to X$ and the opens correspond to `maps out' $X \to \bS$ where $\bS$ is the Sierpinski space. The latter can be thought of as `observable' \cite{abramsky-domain-theory-in-logical-form} or `affirmable' \cite{vickers-topology-via-logic} properties of the points of $X$. They play a similar role to events in probability theory.
Indeed, in probability theory one faces a similar challenge: suppose we have a measure that assigns only the values $0$ or $1$ to each event (measurable set). When is this measure a Dirac delta at a unique point? Whenever this is the case for all measures, we call the space \emph{sober}, by analogy with topological spaces.
We can define the concept of sobriety in general, in terms of the monad $D$, as follows.

\begin{definition}\label{defsober}
  Let $T$ be a monad on a category $\bC$.
  An object $X \in \bC$ is \emph{sober} for the monad $T$ if its unit fork \eqref{unitfork} is an equalizer. 
\end{definition}

A similar definition was given by Taylor~\cite[Definition~4.7]{taylor-sober-spaces-and-continuations}.
See also~\cite[Section~3]{bucalo-rosolini-sobriety-for-equilogical-spaces}.

Equivalently, $X$ is sober if the unit $e:X\to DX$ of the monad $D$ is an isomorphism. That is, sober objects for $T$ are exactly the fixed points of the adjunction \eqref{Dbijection}, the one associated to $D$.
Denote by $\mathrm{Sober}(T)$ the full subcategory of $\bC$ (equivalently, of $\Kl(D)$) of sober objects. This is also known as the center of the adjunction. 

\begin{remark}
 The following conditions are equivalent for a monad $T$ on a category $\bC$:
 \begin{itemize}
  \item Every object of $\bC$ is sober.
  \item The monad $T$ satisfies Moggi's ``equalizing requirement'' \cite[Section~4]{moggi-notions-of-computation-and-monads}.
  \item Every thunkable morphism is uniquely pure. 
 \end{itemize}
\end{remark}

In particular, every object is sober whenever every deterministic morphism of $\Kl(T)$ is uniquely pure. 
For the case of Markov categories, this is in particular an instance of \emph{representability}, as defined in \cite[Section~3]{fritz2020representable}. 

\begin{example}
 For the distribution monad on $\cat{Set}$, every object is sober. 
\end{example}

\begin{example}
 For the Giry monad restricted to the category of \emph{standard Borel spaces}, every object is sober. See for example \cite[Example~10.5]{fritz2019synthetic} (where the Kleisli category is called $\cat{BorelStoch}$). 
\end{example}

Moreover, by \Cref{TXeq}, every object in the form $TX$ is sober for every monad $T$.

In general, not all objects are sober. However, in several cases, given any object we can find a universal ``sobrification''. This is the case whenever the adjunction \eqref{Dbijection} is idempotent.
By the general theory of idempotent adjunctions applied to the case of the Kleisli category of $D$, we have the following statement. 
\begin{proposition}\label{equividemp}
 The following conditions are equivalent. 
 \begin{itemize}
  \item The monad $D$ is idempotent.
  \item For every $X$, the object $DX$ is sober.
  \item The functor $D:\Kl(D)\to\bC$ is fully faithful.
  \item The inclusion of fixed points $\mathrm{Sober}(T)\hookrightarrow\Kl(D)$ is an equivalence.
  \item For every $X$, the counit $\theta^\flat:DX\rightsquigarrow X$ is an isomorphism.
 \end{itemize}
\end{proposition}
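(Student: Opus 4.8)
The plan is to build a small dictionary between the five conditions and the adjunction $i \dashv D$, and then read off all the equivalences from the standard theory of idempotent monads and of Kleisli adjunctions. Write $e_X : X \to DX$ for the unit of the monad $D$, so that $\theta_X \circ e_X = \eta_X$. Recall from the discussion after \Cref{defsober} that $X$ is sober for $T$ precisely when $e_X$ is an isomorphism, so that $\mathrm{Sober}(T)$ is exactly the full subcategory of fixed points of $i \dashv D$; recall also that $i : \bC \to \Kl(D)$ is the identity on objects and is left adjoint to the functor $D : \Kl(D) \to \bC$, and that the counit of this adjunction at $X$ is the thunkable map $\theta_X^\flat : DX \rightsquigarrow X$, which corresponds to $\id_{DX}$ under \eqref{Dbijection}. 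Under this dictionary, condition (2) reads ``$e_{DX}$ is invertible for all $X$'', condition (3) reads ``the right adjoint $D$ is full and faithful'', and condition (5) reads ``the counit $\theta^\flat$ is a natural isomorphism''.

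First I would dispatch $(1)\Leftrightarrow(2)$: this is a textbook characterisation of idempotence of a monad, namely that $D$ is idempotent iff $\mu^D$ is invertible iff $De_X$ is invertible for all $X$ iff $e_{DX}$ is invertible for all $X$, the last step using the monad identities $\mu^D_X \circ e_{DX} = \id = \mu^D_X \circ De_X$. Next I would handle $(1)\Leftrightarrow(3)\Leftrightarrow(5)$. Since $i$ is bijective on objects, this is exactly the general theory of idempotent monads specialised to the Kleisli adjunction of $D$ itself: ``a right adjoint is full and faithful iff the counit is an isomorphism'' is standard, and for a Kleisli adjunction the counit at $X$ is $\id_{DX}$ viewed as a morphism $DX \rightsquigarrow X$, for which one checks by a short explicit computation (the only place this argument is not pure general nonsense, and hence the main obstacle) that all components are invertible in $\Kl(D)$ exactly when $D$ is an idempotent monad — the candidate inverse $DX \rightsquigarrow DX$ forces $e_{DX}$ to be invertible, and conversely, when $D$ is idempotent, $De_X = e_{DX}$ provides the inverse.

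Finally, for (4): the inclusion $\mathrm{Sober}(T) \hookrightarrow \Kl(D)$ is full and faithful by construction, so it is an equivalence iff it is essentially surjective, i.e.\ iff every object of $\Kl(D)$ is isomorphic in $\Kl(D)$ to a sober object. Given (2), hence also (5) by the equivalences already established, $DX$ is sober and $\theta_X^\flat : DX \rightsquigarrow X$ is an isomorphism, so (4) holds. Conversely, assuming (4), for any $X$ choose a sober $S$ with an isomorphism $X \cong S$ in $\Kl(D)$; applying the functor $D$ gives $DX \cong DS$ in $\bC$, and since $S$ is sober $e_S$ is invertible, so $DS \cong S$ and hence $DX \cong S$ in $\bC$; as soberness is stable under isomorphism (immediate from naturality of $e$, which gives $e_{DX} = (D\phi)^{-1} \circ e_S \circ \phi$ for an iso $\phi : DX \to S$), we conclude $DX$ is sober, which is (2). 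This yields the cycle $(1)\Leftrightarrow(2)\Leftrightarrow(3)\Leftrightarrow(5)$ together with $(2)\Leftrightarrow(4)$, proving the proposition.
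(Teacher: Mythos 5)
Your proof is correct and takes essentially the same route as the paper, which at this point simply invokes the general theory of idempotent adjunctions applied to the Kleisli adjunction $i \dashv D$; you have just spelled out the standard facts (counit iso $\Leftrightarrow$ right adjoint fully faithful $\Leftrightarrow$ induced monad idempotent, plus the identification of sober objects with fixed points) in this special case. The only blemish is the slip describing the candidate inverse of $\theta^\flat_X$ as a map $DX \rightsquigarrow DX$ rather than $X \rightsquigarrow DX$ (concretely the morphism with underlying map $e_{DX}\circ e_X$), which does not affect the argument.
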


If any (hence all) of the conditions above holds, we can view $D$ as a ``sobrification'' functor, analogous to the case of topological spaces, exhibiting sober objects as a reflective subcategory of $\bC$. 

\begin{theorem}\label{thmidempotent}
 For the following categories $\bC$ and monads $T$, the associated monad $D$ is idempotent:
 \begin{itemize}
  \item The ``Giry'' monads of probability and subprobability measures $P$ and $M$ on $\cat{Meas}$;
  \item The lower Vietoris monad (a.k.a.~Hoare powerdomain) on $\cat{Top}$.
 \end{itemize}
\end{theorem}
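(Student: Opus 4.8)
The plan is to verify, for each of the three monads, the second condition of \Cref{equividemp}: that $DX$ is sober for every $X$, i.e.\ that the unit fork \eqref{unitfork} at the object $DX$ is an equalizer. Equivalently — and this is the form I would actually check — I would show that the equalizer $DY$ of $(\eta_{TY}, T\eta_Y)$ has the property that its own unit fork is an equalizer. By \Cref{thunkfork}, $DY$ can be described concretely: for the Giry monad it is the set of $\{0,1\}$-valued probability measures on $Y$ (the ones $p$ with $\eta(p) = T\eta(p)$, which by the computation in \Cref{notinjective}-style reasoning are exactly the measures whose pushforward $\eta_*$ is a Dirac), topologised with the induced $\sigma$-algebra; for the lower Vietoris monad it is the analogous object of completely prime filters / irreducible closed sets.

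First I would set up the identification of $DY$ very explicitly in each case. For $P$ and $M$ on $\cat{Meas}$: a probability measure $p$ on $Y$ sits in the fork $\eta_Y\circ(-)=T\eta_Y\circ(-)$ iff $\delta_p = \eta_*(p)$ in $PPY$, and unwinding this says $p$ is $0/1$-valued \emph{and} ``concentrated'' in the sense that $\eta_*(p)$ is a Dirac — so $DY$ is the space of $0/1$-valued measures that are deltas of their own pushforward; crucially, by \Cref{TXeq} the ambient object $TY$ is already sober, and $DY\hookrightarrow TY$ is a $\sigma$-subalgebra inclusion, so I would try to transfer sobriety along this inclusion. Second, the key step: show that the unit fork at $DY$ is an equalizer. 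Given $q\in TTY$ (or more precisely $q \in PPY$) equalising $\eta_{DY}$ and $(DY)\eta$, I must produce a point of $DY$. I would push $q$ forward along $D Y \xrightarrow{\theta_Y} TY$ to land in $TTY$, use that $TY$ is sober (\Cref{TXeq}) to get a point $p\in TY$, and then argue $p$ actually lies in the subobject $DY$ — this last step is where the specific structure of $\{0,1\}$-valued measures (resp.\ prime filters) does real work, since I need the limiting point to again be $0/1$-valued, and that should follow because the relevant ``$p(A)\in\{0,1\}$'' conditions are preserved under the evaluation maps that witness sobriety of $TY$. For the lower Vietoris monad the same skeleton applies with ``Scott-open'' / ``irreducible closed'' replacing ``$\{0,1\}$-valued measure'', and sobriety of $DY$ is essentially the classical statement that the space of irreducible closed subsets of any topological space (with the lower Vietoris topology) is sober — a known fact I would cite or reprove in the two-out-of-three style.

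The main obstacle, I expect, is the measure-theoretic bookkeeping in the Giry case: verifying that the induced $\sigma$-algebra on $DY$ is exactly the one for which the evaluation maps $\mathrm{ev}_A : DY \to \{0,1\}$ are measurable, and then checking that a measure on $DY$ equalising the unit fork really is forced to be a Dirac — this requires knowing enough measurable sets in $DDY$ to separate points of $DY$, which is delicate because $\sigma$-algebras generated by evaluations can be coarse. A clean way around it is to observe that $DY$, being a $\{0,1\}$-valued-measure space, is itself (isomorphic to) a subspace of $2^{\Sigma_Y}$ cut out by finitely-additive-filter conditions, and on such spaces the unit fork of $P$ is an equalizer by a direct argument analogous to the distribution-monad example; I would lean on this concrete model rather than abstract nonsense. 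Finally, once $DX$ is shown sober for each monad, \Cref{equividemp} immediately gives idempotence of $D$, completing the proof.
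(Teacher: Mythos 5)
Your route---verifying the second condition of \Cref{equividemp} by showing directly that $DX$ is sober, using the concrete description of $DX$ as the space of $\{0,1\}$-valued measures---is viable, and once unwound it has the same mathematical content as the paper's proof, just packaged element-wise. The paper instead proves \Cref{lem:d-idempotent-iff-t-theta-monic} ($D$ is idempotent iff $T\theta$ is monic, and for this it suffices that $T$ send regular monos to monos), and then checks in a few lines that the Giry monads send embeddings of measurable spaces to injections: if $f_*p=f_*q$ for an embedding $f$, then $p=q$, because every measurable subset of the domain is a preimage. Note that this is exactly the fact your argument ultimately rests on: after pushing $q\in P(DY)$ forward along $\theta$ and using the split equalizer at $TY$ (\Cref{TXeq}) to get $\theta_*q=\delta_p$, you still must conclude $q=\delta_{\bar p}$ on $DY$, and what does the work there is that every measurable subset of $DY$ has the form $\theta^{-1}(B)$---i.e.\ injectivity of pushforward along the regular mono $\theta$, which is precisely the monicity of $T\theta$ in the lemma. (Incidentally, the step ``$p$ lies in $DY$'' needs no special structure of $0/1$ measures: with $p=\mu_Y(T\theta(q))$ one has $T\eta_Y(p)=\mu_{TY}\bigl(T(T\eta_Y\circ\theta)(q)\bigr)=\mu_{TY}\bigl(T(\eta_{TY}\circ\theta)(q)\bigr)=T\theta(q)=\eta_{TY}(p)$, by naturality of $\mu$, the fork property of $\theta$, and $\mu\circ T\eta=\id$; so $p$ factors through the equalizer formally.) The paper's lemma buys reusability---the same criterion ``$T$ preserves regular monos'' disposes of all the listed examples at once---while your version makes the measure-theoretic content visible.

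Two points in your write-up would need repair. First, the proposed ``clean way around'' the bookkeeping---that on subspaces of $2^{\Sigma_Y}$ cut out by filter conditions ``the unit fork of $P$ is an equalizer by a direct argument analogous to the distribution-monad example''---is not available as stated: the distribution-monad argument works because in $\sets$ every subset (in particular every singleton) is a legitimate test set, and exactly this fails for general measurable spaces (\Cref{notinjective}, and \Cref{zeroone} shows $0/1$ measures need not be Diracs). Asserting sobriety of such subspaces outright is close to assuming the conclusion; the sound version of the step is the pushforward/subspace-$\sigma$-algebra argument you sketched first. Second, for the lower Vietoris monad you lean on identifying $DY$ with the space of irreducible closed sets; that identification is itself nontrivial (it is \Cref{indeedsober} in the paper, proved later using \Cref{Hobserv} and \Cref{main}) and is not needed for idempotence: the analogue of the Giry computation---$H$ sends subspace embeddings to monos, by testing closed sets against a basis as in \Cref{pilambdaH}---already suffices via \Cref{lem:d-idempotent-iff-t-theta-monic}, or via your direct argument with $\theta$.
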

For more details on the latter see \cite[Section~2]{fritz2019support}.

\Cref{thmidempotent} can be proven by the following helpful lemma.
\begin{lemma}\label{lem:d-idempotent-iff-t-theta-monic}
 The monad $D$ is idempotent if and only if for every $X$, the map $T\theta:TDX\to TTX$ is monic. 
 In particular, as the map $\theta:DX\to TX$ is an equalizer, it suffices to show that $T$ maps regular monomorphisms to monomorphisms.
\end{lemma}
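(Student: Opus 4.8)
The plan is to route everything through \Cref{equividemp}, which lists conditions equivalent to $D$ being idempotent; I will use two of them. First, $D$ is idempotent iff every counit $\theta_X^\flat\colon DX\rightsquigarrow X$ is an isomorphism. Second, $D$ is idempotent iff every $DX$ is sober, i.e.\ the unit fork at $DX$ --- with parallel pair $(\eta_{TDX},T\eta_{DX})\colon TDX\to TTDX$ --- is an equalizer. Each of these powers one direction of the `iff', and the `in particular' clause then drops out at once.

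For the forward direction I would argue as follows. Assume $D$ is idempotent, so each $\theta_X^\flat$ is an isomorphism in $\Kl(T)_\thunk$; since $\Kl(T)_\thunk$ is a wide subcategory of $\Kl(T)$, the same inverse witnesses that $\theta_X^\flat$ is an isomorphism in $\Kl(T)$. Applying the right adjoint $G\colon\Kl(T)\to\bC$ of the Kleisli adjunction --- which preserves isomorphisms and sends $\theta_X^\flat$ to its Kleisli extension $(\theta_X)^\dagger=\mu_X\circ T\theta_X\colon TDX\to TX$ --- shows that $\mu_X\circ T\theta_X$ is an isomorphism in $\bC$, hence in particular monic; therefore its right-hand factor $T\theta_X$ is monic.

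For the converse I would verify by hand that the unit fork at $DX$ is an equalizer and then invoke \Cref{equividemp}. First note that $\eta_{DX}$ is always monic: by naturality of $\eta$ we have $T\theta_X\circ\eta_{DX}=\eta_{TX}\circ\theta_X$, which is the composite of the split mono $\eta_{TX}$ (retraction $\mu_X$) with the equalizer $\theta_X$, hence monic; this secures uniqueness of factorisations through $\eta_{DX}$. For existence, given $h\colon Z\to TDX$ with $\eta_{TDX}\circ h=T\eta_{DX}\circ h$, I would put $m\coloneqq\mu_X\circ T\theta_X\circ h\colon Z\to TX$ and compute --- using naturality of $\eta$ and of $\mu$, the monad laws $\mu_A\circ\eta_{TA}=1_{TA}=\mu_A\circ T\eta_A$, and the identity $\eta_{TX}\circ\theta_X=T\eta_X\circ\theta_X$ defining the equalizer $\theta_X$ --- that both $\eta_{TX}\circ m$ and $T\eta_X\circ m$ equal $T\theta_X\circ h$. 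Thus $m$ equalizes $(\eta_{TX},T\eta_X)$, giving a unique $k\colon Z\to DX$ with $\theta_X\circ k=m$. Then $T\theta_X\circ\eta_{DX}\circ k=\eta_{TX}\circ\theta_X\circ k=\eta_{TX}\circ m=T\theta_X\circ h$, and cancelling the monic $T\theta_X$ yields $\eta_{DX}\circ k=h$. So the unit fork at $DX$ is an equalizer for every $X$, i.e.\ $D$ is idempotent.

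The `in particular' clause is then immediate: by construction $\theta_X$ is the equalizer of a parallel pair, hence a regular monomorphism, so if $T$ carries regular monos to monos then every $T\theta_X$ is monic and the converse applies. I expect the only real work to be in the converse --- pinning down the mediating map $m=\mu_X\circ T\theta_X\circ h$ and pushing the hypothesis $\eta_{TDX}\circ h=T\eta_{DX}\circ h$ through to the conclusion that $m$ equalizes $(\eta_{TX},T\eta_X)$ --- but this is a routine diagram chase with no conceptual obstacle, and the forward direction together with the corollary are each only a line or two.
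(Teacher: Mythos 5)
Your proof is correct, but it routes the two directions through different items of \Cref{equividemp} than the paper does, so it is worth comparing. The paper reduces idempotence in both directions to the single equation $Te\circ\theta_X=\eta_{DX}$: since the triangle identities make $\theta_X^\flat$ split epi with section $(\eta\circ e)^\flat$, being an isomorphism is equivalent to that one equation, which the paper then shows holds iff $T\theta_X$ is monic --- one direction produces the explicit retraction $\mu\circ TTe$ (so $T\theta_X$ is in fact \emph{split} monic, a strengthening the paper reuses later, in the proof that $R$-observationality plus injectivity of $TR$ gives idempotence), and the other is a three-line cancellation against the monic $T\theta_X$. Your forward direction instead applies the right Kleisli adjoint to the iso $\theta_X^\flat$ and cancels from $\mu_X\circ T\theta_X$; this is slick and perfectly valid, though it only yields plain monicity (which is all the statement asks). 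Your converse goes through a different equivalent condition --- sobriety of $DX$ --- and verifies the equalizer property of the unit fork at $DX$ by hand, with mediating map $m=\mu_X\circ T\theta_X\circ h$; I checked the two identities you defer ($\eta_{TX}\circ m=T\theta_X\circ h$, using the hypothesis on $h$ plus naturality and the unit law, and $T\eta_X\circ m=T\theta_X\circ h$, which needs only the equalizing identity for $\theta_X$ and the unit law), and they go through exactly as you indicate, as does the monicity of $\eta_{DX}$ for uniqueness. The trade-off: your converse is longer and essentially re-derives, in this special case, a piece of the idempotent-adjunction machinery that the paper's split-epi argument sidesteps, but in exchange it is entirely elementary and makes the sobriety of $DX$ explicit rather than implicit. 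The `in particular' clause you handle exactly as the paper does.
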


\begin{proof} 
 First of all, by \Cref{equividemp}, $D$ is idempotent if and only if for every $X$, the counit $\theta^\flat:DX\rightsquigarrow X$ is an isomorphism.
 
 Denote now by $e:X\to DX$ the unit of the monad $D$, which can be obtain from the universal property of $D$ as an equalizer for the unit fork, as in the following diagram.
 $$
 \begin{tikzcd}
 X \ar[dashed]{d}[swap]{e} \ar{dr}{\eta} \\
 DX \ar{r}[swap]{\theta} & TX \ar[shift left]{r}{\eta} \ar[shift right]{r}[swap]{T\eta} & TTX 
 \end{tikzcd}
 $$
 By the triangle identities of the adjunction of $D$, $\theta^\flat$ is split epi with section given by the map $E\coloneqq (\eta\circ e)^\flat:X\rightsquigarrow DX$ induced by unit $e:X\to DX$.
 Therefore $\theta^\flat$ is an isomorphism if and only if $E$ is its actual inverse, i.e.~if $E\klcomp \theta^\flat=\id_{DX}$, which in terms of the category $\bC$ reads
 \begin{equation}\label{Tthetamonic}
 DX \xrightarrow \theta TX \xrightarrow{Te} TDX  = DX \xrightarrow \eta TDX .
 \end{equation}
 So suppose that \eqref{Tthetamonic} holds. Then 
 \begin{align*}
 & TDX \xrightarrow{T\theta} TTX  \xrightarrow{TTe} TTDX \xrightarrow{\mu} TDX \\
 & = TDX \xrightarrow{T\eta} TTDX \xrightarrow{\mu} TDX \\
 & = TDX \xrightarrow{\id} TDX ,
 \end{align*}
 so $T\theta$ is split monic.
 
 Conversely, suppose that $T\theta$ is monic. Then 
 \begin{align*}
    DX \xrightarrow \theta TX \xrightarrow{Te} TDX \xrightarrow{T\theta} TTX
    & = DX \xrightarrow \theta TX \xrightarrow{T\eta} TTX \\
    & = DX \xrightarrow \theta TX \xrightarrow{\eta} TTX \\
    & = DX \xrightarrow \eta TDX \xrightarrow{T\theta} TTX ,
  \end{align*}
 which implies \eqref{Tthetamonic}. 
\end{proof}

\begin{proof}[Proof of \Cref{thmidempotent}]
 By \Cref{lem:d-idempotent-iff-t-theta-monic}, it suffices to show that the Giry monad maps regular monomorphisms of $\cat{Meas}$ (i.e.~embeddings of measurable spaces) to monomorphisms of $\cat{Meas}$ (i.e.~injective measurable functions).
 So let $i:X\to Y$ be an embedding of measurable spaces, that is, an injective function such that the every measurable subset $A$ of $X$ is in the form $f^{-1}(B)$ for some measurable subset $B$ of $Y$. Let $p$ and $q$ be measures on $X$, and suppose that $f_*p=f_*q$. Then for every measurable subset $A\subseteq X$ we can find a measurable $B\subseteq Y$ such that
 $$
 p(A) = p(f^{-1}(B)) = f_*p(B) = f_*q(B) = q(f^{-1}(B)) = q(A) .
 $$
 Therefore, $p=q$ already on $X$, and hence $f_*:MX\to MY$ and its restriction $PX\to PY$ are injective.
 
 The lower Vietoris monad case is analogous, once one sees closed sets as dual to open sets (as in \Cref{exampleH}).
\end{proof}

The idempotent monad associated to the lower Vietoris monad is not only idempotent, but it is also the sobrification monad of topology, hence the name ``sober''. 
\begin{theorem}\label{indeedsober}
 The monad $D$ associated to the lower Vietoris monad $H$ on $\cat{Top}$ is the functor assigning to a topological space $X$ the subset of $HX$ given by the irreducible closed sets. 
\end{theorem}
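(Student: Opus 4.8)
The plan is to make the parallel pair $\eta_{HX}, H\eta_X \colon HX \to HHX$ completely explicit in terms of closed sets, compute its equalizer, and recognise the answer. Recall that $HX$ is the set of closed subsets of $X$; write $O_U \coloneqq \{C \in HX : C \cap U \neq \emptyset\}$ for $U \subseteq X$ open, so that these subbasic opens generate the topology of $HX$, the unit is $\eta_X(x) = \overline{\{x\}}$, the functorial action is $Hf(C) = \overline{f(C)}$, and $\mu_X$ sends a closed family of closed sets to the closure of its union. The first step is to observe that the specialization order of $HX$ is ordinary inclusion of closed sets: checking subbasic opens, $C' \in \overline{\{C\}}$ iff ($C' \cap U \neq \emptyset$ implies $C \cap U \neq \emptyset$) for every open $U$, iff every closed set containing $C$ also contains $C'$, iff $C' \subseteq C$. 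Hence $\overline{\{C\}} = {\downarrow} C \coloneqq \{C' \in HX : C' \subseteq C\}$, so that $\eta_{HX}(C) = {\downarrow} C$ when viewed as an element of $HHX$.

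The second step is the heart of the argument. One has $H\eta_X(C) = \overline{\eta_X(C)} = \overline{\mathcal{A}_C}$, where $\mathcal{A}_C \coloneqq \{\overline{\{x\}} : x \in C\}$ and the closure is taken in $HX$. Since $\overline{\{x\}} \subseteq C$ for each $x \in C$, and ${\downarrow} C$ is closed in $HX$ (its complement is $O_{X \setminus C}$), one always has $\overline{\mathcal{A}_C} \subseteq {\downarrow} C$; and since closed subsets of any space are downward closed in the specialization order, $\overline{\mathcal{A}_C} = {\downarrow} C$ if and only if $C \in \overline{\mathcal{A}_C}$. Now I would unwind the latter condition using a \emph{basis}, not merely a subbasis, of neighbourhoods: a basic open neighbourhood of $C$ is a finite intersection $O_{U_1} \cap \dots \cap O_{U_n}$ with each $U_i$ meeting $C$, and it meets $\mathcal{A}_C$ exactly when some $x \in C$ satisfies $\overline{\{x\}} \cap U_i \neq \emptyset$ for all $i$, equivalently --- the $U_i$ being open --- when $C \cap U_1 \cap \dots \cap U_n \neq \emptyset$. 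Thus $C \in \overline{\mathcal{A}_C}$ iff every finite family of opens each meeting $C$ has intersection meeting $C$; the instances $n = 0$ and $n = 2$ together say exactly that $C$ is an irreducible closed set (nonempty, not a union of two proper closed subsets), and $n \geq 3$ reduces to $n = 2$ by induction. Therefore $\eta_{HX}(C) = H\eta_X(C)$ if and only if $C$ is an irreducible closed subset of $X$.

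To conclude, the forgetful functor $\cat{Top} \to \cat{Set}$ creates limits, so with the evident choice of equalizer, $(DX, \theta_X)$ is the set of irreducible closed subsets of $X$ carried with the subspace topology inherited from $HX$, whose opens are the $O_U \cap DX$ --- which is precisely the topology of the classical sobrification, with reflection unit $X \to DX$, $x \mapsto \overline{\{x\}}$. (If the chosen presentation of $H$ places $\emptyset$ in $HX$, it is correctly excluded here, since $\eta_{HX}(\emptyset) = \{\emptyset\} \neq \emptyset = \overline{\mathcal{A}_\emptyset}$.) The functor and monad structure of $D$ are then the evident restrictions of those of $H$, using that the continuous direct image of an irreducible set, and the closure of an irreducible set, are again irreducible; combined with \Cref{thmidempotent} this exhibits $D$ as the idempotent sobrification reflector, justifying the name ``sober''.

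I expect the only real obstacle to be the computation in the second paragraph: recognising that the purely formal equation $\eta_{HX}(C) = H\eta_X(C)$ --- ``$C$ lies in the lower Vietoris closure of the Dirac points of $C$ itself'' --- unwinds exactly into irreducibility of $C$. The small pitfalls are to argue with a basis rather than a subbasis of neighbourhoods of $C$, and to keep track of the degenerate cases $n \in \{0,1\}$, the first of which is what encodes nonemptiness and the second of which is vacuous.
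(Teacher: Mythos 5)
Your proof is correct, but it takes a genuinely different route from the paper's. The paper never computes the equalizer of $(\eta_{HX},H\eta_X)$ directly: it identifies points of $DX$ with thunkable states $1\rightsquigarrow X$, invokes the equivalence thunkable $\Leftrightarrow$ deterministic for $H$ (one direction from \Cref{inclusions}, the other from observationality of $H$, \Cref{Hobserv} plus \Cref{main}), and then unwinds determinism: discardability of $C$ is nonemptiness, and copyability is the equality in $X\times X$ of $C\times C$ with the (closure of the) diagonal copy of $C$, which via \Cref{pilambdaH}, tested on the basis $U\times V$, becomes the statement that $U\mapsto[C\cap U\neq\emptyset]$ preserves binary meets, i.e.\ that $C$ induces a completely prime filter, i.e.\ is irreducible. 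You instead work entirely inside $HX$: you compute $\eta_{HX}(C)={\downarrow}C$ (the specialization order of $HX$ being inclusion), $H\eta_X(C)=\overline{\{\overline{\{x\}}:x\in C\}}$, and reduce their equality to $C$ lying in the latter closure, which a basic-neighbourhood computation turns into exactly the same filter condition (your $n=0$ case encoding nonemptiness, $n=2$ the meet condition). Your route is more elementary and self-contained --- it uses neither observationality nor the deterministic/thunkable comparison, and it additionally exhibits the topology on $DX$, the unit $x\mapsto\overline{\{x\}}$, and the exclusion of $\emptyset$ explicitly, identifying $D$ with the classical sobrification on the nose; what it bypasses is the conceptual point the paper's proof is designed to make, namely that the irreducible closed sets are precisely the \emph{deterministic} states of $H$, which is why the paper routes the argument through the copy-discard structure. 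One small terminological correction: the forgetful functor $\cat{Top}\to\cat{Set}$ does not create limits in the strict sense (any finer topology on the vertex yields another cone with the same underlying cone); all you need, and all that is true, is that equalizers in $\cat{Top}$ are the set-theoretic equalizers equipped with the subspace topology.
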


The proof of this theorem is given at the end of \Cref{exampleH}.
The result resembles the known characterizations of sobriety in terms of equalizers~\cite{taylor-sober-spaces-and-continuations,rosolini-equilogical-spaces-and-filter-spaces,bucalo-rosolini-sobriety-for-equilogical-spaces}, but note that $HX$ is not quite a continuation (it behaves more like a subspace of the continuation --- but of course $\cat{Top}$ is not cartesian closed, see also \cite[Appendix~B]{fritz2019support}).

\section{Observational monads}\label{observational}

In higher-order programming languages one typically has a `ground type' such as `nat' or `bool', representing actual data that we can handle as input or output, as well as higher-order types (i.e.\ function types), or open terms, which are never directly accessible as inputs or outputs and only appear at intermediate stages of computation.
Thus it is of interest to consider when two values of a higher-order type can be interchanged in the middle of programs without altering the \emph{observable} behaviour of the computer.
From the point of view of probability theory one has the same intuition for \emph{real numbers}, i.e.~proving equality of probability measures involves, in the end, proving that certain integrals give the same number.
An \emph{observational monad} makes this intuition precise: in the sense proposed here it corresponds to a semantics of open terms which is abstract for a certain kind of observable equivalence. 

Let $\bC$ be a cartesian monoidal category and $T$ a commutative monad.
Since each $X \in \bC$ is a commutative comonoid in $\Kl(T)$, for each $n \in \bN$ there is a canonical map
\begin{displaymath}
  \cop_n : TX \rightsquigarrow (TX)^{\otimes n}
\end{displaymath}
obtained by iterating the copy map (all possibilities are equal by coassociativity).

By post-composing with $\force \otimes \ldots \otimes \force : (TX)^{\otimes n} \rightsquigarrow X^{\otimes n}$, we define a map
\begin{equation}\label{eq:samp-n}
  \samp_n : TX \rightsquigarrow X^{\otimes n},
\end{equation}
the \emph{$n$'th sampling map}.
As special cases, $\samp_0 = \del$ and $\samp_1 = \force$. (Note that the map $\samp$ appearing in \cite{fritz2020representable} and \cite{definetti-markov} corresponds to our $\samp_1 = \force$.)

We now want to make the intuition precise that \emph{probability measures can be tested for equality by taking repeated independent samples}.

\begin{definition}\label{def:observational-monad}
  Let $T$ be a commutative monad on a cartesian monoidal category $\bC$.
  Then $T$ is an \emph{observational monad} if for every object $X$ the family of maps $(\samp_n : TX \rightsquigarrow X^{\otimes n})_{n \in \bN}$ is jointly monic in $\Kl(T)$.
\end{definition}

\begin{remark}
  The notion of observational monad also makes sense for strong monads that are not necessarily commutative, but we will not pursue that line here.
\end{remark}

\begin{proposition}[Remark~6.3 in the published version]\label{monicinC}
The maps \eqref{eq:samp-n} are jointly monic in $\Kl(T)$ iff the maps
\begin{equation}\label{eq:samp-n-in-C}
  TTX \xrightarrow{T(\Delta_n)} T((TX)^n) \xrightarrow{T(\nabla_n)} TT(X^n) \xrightarrow \mu T(X^n)
\end{equation}
are jointly monic in $\bC$.
\end{proposition}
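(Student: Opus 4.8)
The plan is to unwind both notions of "jointly monic" explicitly in terms of the Kleisli-to-$\bC$ correspondence and check they amount to the same condition. First I would recall that, for any family $(k_n : A \rightsquigarrow B_n)_{n}$ of Kleisli morphisms, being jointly monic in $\Kl(T)$ means: for all $u, v : Z \rightsquigarrow A$, if $k_n \klcomp u = k_n \klcomp v$ for every $n$, then $u = v$. Translating into $\bC$ via $(-)^\sharp$, and using that $(k \klcomp u)^\sharp = \mu \circ T(k^\sharp) \circ u^\sharp$, this says: the family of maps $\bigl(Z \xrightarrow{u^\sharp} TA \xrightarrow{\mu \circ T(k_n^\sharp)} TB_n\bigr)$ determines $u^\sharp$ among morphisms $Z \to TA$ — equivalently, the family $\bigl(TA \xrightarrow{\mu \circ T(k_n^\sharp)} TB_n\bigr)_n$ is jointly monic in $\bC$ \emph{when restricted to morphisms into $TA$ of the form $u^\sharp$}. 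But every morphism $Z \to TA$ is of this form (that is exactly the Kleisli bijection $(-)^\flat$), so joint monicity of $(k_n)_n$ in $\Kl(T)$ is literally joint monicity of $\bigl(\mu \circ T(k_n^\sharp) : TA \to TB_n\bigr)_n$ in $\bC$.

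Next I would compute $\samp_n^\sharp : TX \to T(X^n)$ and identify it with the composite in \eqref{eq:samp-n-in-C}. Here I would unfold the definition: $\samp_n = (\force^{\otimes n}) \circ \cop_n$, where $\cop_n : TX \rightsquigarrow (TX)^{\otimes n}$ is the $n$-fold copy (a pure morphism, whose $(-)^\sharp$ is $\eta \circ \Delta_n$ for the $\bC$-diagonal $\Delta_n : TX \to (TX)^n$), and $\force^{\otimes n} : (TX)^{\otimes n} \rightsquigarrow X^{\otimes n}$ has $(-)^\sharp$ given by the monoidal-structure map $\nabla_n : (TX)^{\otimes n} \to T(X^{\otimes n})$ (since $\force^\sharp = \id_{TX}$ and the tensor in $\Kl(T)$ is built from $\nabla$). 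Composing in the Kleisli category and passing back to $\bC$ gives exactly
\begin{displaymath}
  \samp_n^\sharp = \bigl(TX \xrightarrow{T\Delta_n} T((TX)^n) \xrightarrow{T\nabla_n} TT(X^n) \xrightarrow{\mu} T(X^n)\bigr),
\end{displaymath}
which, after the substitution $X \rightsquigarrow X$ (so $A = X$, $TA = TX$ plays the role above but with the outer $T$), matches \eqref{eq:samp-n-in-C} with $TTX$ as the source — the apparent index shift is because joint monicity in $\Kl(T)$ of maps out of $TX$ corresponds, by the first paragraph, to joint monicity in $\bC$ of maps out of $T(TX) = TTX$. Combining the two paragraphs then yields the proposition.

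The main obstacle I expect is purely bookkeeping: carefully tracking the two layers of $T$ and making sure the monoidal/strength coherence identities are invoked correctly when rewriting $(\force^{\otimes n} \circ \cop_n)^\sharp$ as $\mu \circ T\nabla_n \circ T\Delta_n$. In particular one must check that the Kleisli tensor of the $\force$'s really does produce $\nabla_n$ after applying $(-)^\sharp$, and that coassociativity of $\cop$ makes $\Delta_n$ well-defined independent of bracketing — both standard but needing the commutative-monad hypothesis. Everything else is a direct application of the Kleisli adjunction bijection being natural, together with the elementary observation that a family of morphisms with common domain $C$ is jointly monic in a category iff precomposition by it is injective on morphisms into $C$.
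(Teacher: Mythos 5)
Your proposal is correct and takes essentially the same route as the paper: translate joint monicity along the Kleisli bijection (every $\bC$-morphism $Z \to TA$ is $u^\sharp$ for a unique Kleisli morphism $u$, so joint monicity of $(k_n)$ in $\Kl(T)$ is joint monicity of the extensions $\mu \circ T(k_n^\sharp)$ in $\bC$), and compute $\samp_n$ in $\bC$ from $(\cop_n)^\sharp = \eta \circ \Delta_n$ and $(\force^{\otimes n})^\sharp = \nabla_n$. The only blemish is that your displayed equation conflates $\samp_n^\sharp$, which equals $\nabla_n \circ \Delta_n \colon TX \to T(X^n)$, with its Kleisli extension $\mu \circ T(\samp_n^\sharp) = \mu \circ T\nabla_n \circ T\Delta_n \colon TTX \to T(X^n)$; it is the latter that constitutes the family \eqref{eq:samp-n-in-C}, exactly as your ``index shift'' remark already indicates, so the argument goes through once that label is corrected.
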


Intuitively, the map in \eqref{eq:samp-n-in-C} takes a distribution on distributions on $X$, samples to get a distribution on $X$, and then returns the result of $n$ independent samples from that distribution.

\begin{proof}
 Consider the following diagram in $\Kl(T)$. 
 $$
 \begin{tikzcd}[column sep=small]
  A \ar[shift left, kl]{r}{f} \ar[shift right, kl]{r}[swap]{g} & TX \ar[kl]{rr}{\cop_n} && TX^{\otimes n} \ar[kl]{rrr}{\thunk^{\otimes n}} &&& X^{\otimes n}
 \end{tikzcd}
 $$
 The condition that the maps $\samp_n$ form a monic family in $\Kl(T)$ means that whenever the composites $\thunk^{\otimes n}\klcomp\cop_n\klcomp f$ and $\thunk^{\otimes n}\klcomp\cop_n\klcomp g$ in the diagram above are equal for all $n$, then $f=g$.
 
 Now in terms of the category $\bC$ we can rewrite the (Kleisli) composites above as follows.
 $$
 \begin{tikzcd}[column sep=small]
  A \ar[shift left]{r}{f^\sharp} \ar[shift right]{r}[swap]{g^\sharp} & TTX \ar{rr}{T(\Delta_n)} && T((TX)^n) \ar{rr}{T(\nabla_n)} && TT(X^n) \ar{r}{\mu} & T(X^n) 
 \end{tikzcd}
 $$
 The condition that the maps $\samp_n$ form a monic family in $\Kl(T)$ now reads: whenever the composites in this new diagram are equal, then $f=g$, or equivalently $f^\sharp=g^\sharp$. But this means precisely that the maps in \eqref{eq:samp-n-in-C} are jointly monic in $\bC$.
\end{proof}

From \eqref{eq:samp-n-in-C} it is clear that taking one sample $n=1$ is in general insufficient: that would correspond to hoping that the map $\mu:TTX\to TX$ alone is monic. This is in general not the case. For the Giry monad, for example, this is far from injective: a probability measure can be in general obtained as a mixture of other measures in several different ways. 

The two major consequences of observationality will be given in \Cref{definetti} and \Cref{detmaps}. 
In the rest of this section, we give some technical sufficient conditions that one can use in order to prove that a monad is observational, and we introduce the idea of ``objects which are directly observable'' (such as real numbers for probability).

\subsection{Observations via result objects}\label{resultobjects}

An observational monad is one for which an `observation procedure' for the powers of $X$ can be transferred to one for $TX$.
Ordinarily, direct observations are only made for \emph{ground types}, e.g.\ Boolean values or real numbers, and all observations at higher types ultimately implemented in terms of direct ground observations.
We call these special types the \emph{result objects}.

\begin{definition}\label{def:s-observational-monad}
  Let $R$ be an object of $\bC$.
  We say that the monad $T$ is \emph{$R$-observational} iff for each object $X$, the family of morphisms
  \begin{equation}\label{eq:s-observational}
        TX \xrsquigarrow{\samp_n} X^{\otimes n} \xrsquigarrow{h_1^\flat \otimes \ldots \otimes h_n^\flat} R^{\otimes n}
  \end{equation}
  where $n \in \bN$ and $h_1,\ldots,h_n : X \to TR$, is jointly monic.
  We call $R$ the \emph{result object}.
\end{definition}

It is easy to see that if $T$ is $R$-observational then $T$ is observational.
We can also write the $R$-observationality condition in terms of the category $\bC$, analogously to \Cref{monicinC}. It reads that the following maps need to be jointly monic for all $n$.
 \begin{equation}\label{RobjC}
  TTX \xrightarrow{T(Th_1,\dots,Th_n)} T((TTR)^n) \xrightarrow{T(\nabla_n\nabla_n)} TTT(R^n) \xrightarrow{\mu\mu} T(R^n)
 \end{equation}

As well as being useful for demonstrating that a given monad is observational, the property of being $R$-observational allows us to exploit \Cref{lem:d-idempotent-iff-t-theta-monic}.
Recall that an object $P$ is \emph{$\MCM$-injective} \cite{maclane-cwm1998} with respect to a class $\MCM$ of morphisms iff whenever $(i : A \to B) \in \MCM$ and $h : A \to P$, there exists a (not necessarily unique) map $k : B \to P$ with $k \circ i = h$.

\begin{proposition}
  Suppose that $T$ is $R$-observational and that $TR$ is injective with respect to the class of regular monomorphisms.
  Then the submonad $D$ of thunkable morphisms is idempotent.
\end{proposition}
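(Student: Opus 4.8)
The plan is to apply \Cref{lem:d-idempotent-iff-t-theta-monic}, which tells us it suffices to show that $T\theta : TDX \to TTX$ is monic for every $X$. Equivalently, since $D$ is idempotent iff the sampling maps — wait, more directly: I will show that for parallel $f^\sharp, g^\sharp : A \to TDX$ (or rather parallel maps into $TDX$) with $T\theta \circ f^\sharp = T\theta \circ g^\sharp$, already $f^\sharp = g^\sharp$. The idea is to reduce joint-monicity of $T\theta$ to joint-monicity of the $R$-observational family \eqref{RobjC} by exhibiting each map in that family for $DX$ as factoring through $T\theta$, so that equality after postcomposing with $T\theta$ forces equality after postcomposing with the whole family, hence equality.

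First I would unwind what $R$-observationality gives us for the object $DX$: the maps
\begin{displaymath}
 TTDX \xrightarrow{T(Th_1,\dots,Th_n)} T((TTR)^n) \xrightarrow{T(\nabla_n\nabla_n)} TTT(R^n) \xrightarrow{\mu\mu} T(R^n),
\end{displaymath}
indexed by $n$ and by $h_1,\dots,h_n : DX \to TR$, are jointly monic. So it is enough to show that each such composite factors through $T\theta : TDX \to TTX$ — more precisely, that it extends along $T\theta$ in the sense that there is some family of maps $TTX \to T(R^n)$ whose precomposition with $T\theta$ gives the displayed composites. For this I would use the injectivity hypothesis on $TR$: given $h : DX \to TR$ and the regular mono $\theta : DX \to TX$, injectivity of $TR$ with respect to regular monos produces $\bar h : TX \to TR$ with $\bar h \circ \theta = h$. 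Applying $T$, $T\bar h \circ T\theta = Th$. Doing this for each of $h_1,\dots,h_n$ and then assembling via $T(T\bar h_1, \dots, T\bar h_n)$, $T(\nabla_n\nabla_n)$, $\mu\mu$, we get a map $TTX \to T(R^n)$ (built from the $\bar h_i$) which, precomposed with $T\theta : TDX \to TTX$, reproduces the corresponding member of the $R$-observational family for $DX$. [One should be slightly careful about where the first $T$ is applied: the family \eqref{RobjC} starts from $TTDX$, so strictly I want to work with $TT\theta : TTDX \to TTTX$ and the observational family for $DX$ built from $h_i$; the factorization then goes $TT\bar h_i \circ TT\theta = TTh_i$, and the rest of the pipeline is natural, so the whole composite factors through $TT\theta$, hence a fortiori through $T\theta$ after one application of $T$ — I would state the reduction at the level that matches \Cref{lem:d-idempotent-iff-t-theta-monic} exactly, namely monicity of $T\theta$.]

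Putting it together: suppose $u, v$ are parallel maps with $T\theta \circ u = T\theta \circ v$. Applying $T$ gives $TT\theta \circ Tu = TT\theta \circ Tv$, and by the factorization above every member of the jointly monic $R$-observational family for $DX$ agrees on $Tu$ and $Tv$; hence $Tu = Tv$. Now I need to descend from $Tu = Tv$ to $u = v$, which holds because $\eta$ is a (split, natural) mono on objects of the form — hmm, not in general. Instead I would organize the argument so that monicity is tested directly on $T\theta$: for $a, b : A \to TDX$ with $T\theta \circ a = T\theta \circ b$, form $Ta, Tb$ — no. The clean route is: \Cref{lem:d-idempotent-iff-t-theta-monic} only needs $T\theta$ monic, and the reduction shows the $R$-observational family for $DX$ (which lives on $TTDX$) factors through $TT\theta$; but we can equally run the $n$-sampling picture one level down, testing maps into $TDX$ directly against the family $\samp_n$ for $DX$ composed with the $h_i$ — that family is jointly monic in $\Kl(D)$... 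The \emph{main obstacle} is precisely this bookkeeping of which power of $T$ the observational family naturally lives on versus the single $T$ in $T\theta$; I expect to resolve it by noting that $\theta$ is split by $\mu$ (it is a coreflexive equalizer), so $T\theta$ is split monic iff the relevant identity holds after one more $T$, matching exactly the pipeline in \eqref{RobjC} — alternatively by invoking $R$-observationality with one of the $h_i$ chosen to recover $\eta$, forcing the extra $\eta$-level to be faithful on the image of $\theta$. Once the levels are aligned, joint monicity of the $R$-observational family plus the injective-extension lemma yield that $T\theta$ is monic, and \Cref{lem:d-idempotent-iff-t-theta-monic} finishes the proof.
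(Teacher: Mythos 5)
Your overall strategy coincides with the paper's: reduce to monicity of $T\theta_X$ via \Cref{lem:d-idempotent-iff-t-theta-monic}, use injectivity of $TR$ against the regular monomorphism $\theta_X$ (it is an equalizer) to extend each $h : DX \to TR$ to $\bar h : TX \to TR$ with $\bar h \circ \theta_X = h$, and thereby exhibit each member of the jointly monic $R$-observational family \eqref{RobjC} for $DX$ as the precomposition with $TT\theta_X$ of the corresponding member of the family for $TX$. That correctly gives that $TT\theta_X$ is monic.

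The genuine gap is the level mismatch you flag but never close: the family \eqref{RobjC} lives on $TTDX$, so your argument proves $TT\theta_X$ monic, whereas the lemma asks for $T\theta_X$ monic. Your attempted descent --- take $u,v$ with $T\theta_X \circ u = T\theta_X \circ v$ and apply $T$ --- only yields $Tu = Tv$, which (as you note) does not imply $u = v$ since $T$ is not faithful; and your fallback suggestions do not repair this: $\theta_X$ is a coreflexive equalizer whose \emph{parallel pair} $(\eta, T\eta)$ has common retraction $\mu$, but that does not make $\theta_X$ itself split monic, and ``choosing $h_i$ to recover $\eta$'' is not a move the $R$-observationality condition offers. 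The correct bridge is a one-line use of naturality of $\eta$: if $T\theta_X \circ u = T\theta_X \circ v$, postcompose with $\eta_{TTX}$ and use $\eta_{TTX} \circ T\theta_X = TT\theta_X \circ \eta_{TDX}$ to get $TT\theta_X \circ \eta_{TDX} \circ u = TT\theta_X \circ \eta_{TDX} \circ v$; monicity of $TT\theta_X$ gives $\eta_{TDX} \circ u = \eta_{TDX} \circ v$, and $\eta_{TDX}$ is split monic (retraction $\mu_{DX}$), so $u = v$. This is exactly how the paper passes from $TT\theta_X$ to $T\theta_X$; with that sentence inserted, your proof becomes essentially the paper's.
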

\begin{proof}
  By \Cref{lem:d-idempotent-iff-t-theta-monic} it suffices to show that $T\theta_X$ is monic.
  Since $\eta_{TTX} \circ T\theta_X = TT\theta_X \circ \eta_{TDX}$ it suffices for this to show that $TT\theta_X$ is monic.
  (Indeed, this is also necessary, since we saw above that if $T\theta_X$ is monic then it is also split monic).
  The $R$-observationality condition for $DX$ says that the family of maps \eqref{RobjC} is monic, so it suffices to show that
  \begin{align*}
    &\mu\mu\circ T(\nabla_n\nabla_n) \circ T(Th_1,\dots,Th_n) \circ TT\theta_X \\
    &= \mu\mu\circ T(\nabla_n\nabla_n) \circ T(T(h_1\circ\theta_X),\dots,T(h_n\circ\theta_X))
  \end{align*}
  is monic.
  By $R$-observationality, it is sufficient that every map $s : DX \to TR$ factorize as $h \circ \theta_X$ for some $h : TX \to TR$, since then this family is equivalent to the one in the $R$-observationality condition for $TX$.
  But this just says that $TR$ is a $\{\theta_X\}$-injective object.
\end{proof}

\subsection{Result objects with a monoid structure}

A particularly useful case of result object is when $R$ is a monoid in $\bC$. Then $TR$ is a monoid in the category of $T$-algebras, with unit $e^{TR}$ and multiplication $m^{TR}$ given as follows, 
\begin{equation}\label{TRmonoid}
1 \xrightarrow{\eta} T1 \xrightarrow{Te^R} TR \qquad
TR \times TR \xrightarrow{\nabla} T(R\times R) \xrightarrow{Tm^R} TR
\end{equation}
where $e^R$ and $m^R$ are the unit and multiplication of $R$.
For probability monads, we generally take $TR$ to be the unit interval $[0,1]$, which is a monoid under multiplication (as well as an algebra under integration). In particular,
\begin{itemize}
 \item For the distribution and Giry monad $P$, the object $[0,1]$ can be written as $P\{0,1\}$ (where the product in $R=\{0,1\}$ is multiplication);
 \item For the monad of subprobability measures $M$, the object $[0,1]$ can be written as $M1$ (with the trivial monoid structure on $R=1$).
\end{itemize}

Let now $(S,s:TS\to S)$ be a $T$-algebra with monoid structure (for example, in the form $S=TR$ as above, with $R$ a monoid in $\bC$.) The Eilenberg-Moore adjunction gives a bijection for each object $X$,
$$
\begin{tikzcd}[row sep=0]
 \bC(X,S) \ar{r}{\cong} & \cat{Alg}(T)(TX,S) \\
 f \ar[mapsto]{r} & \varepsilon_f = s\circ Tf .
\end{tikzcd}
$$
For the Giry monad, for example, given the function $f:X\to [0,1]$, the corresponding morphism $\varepsilon_f:PX\to [0,1]$ is the \emph{integral} of $f$:
$$
p \mapsto \int f\, dp .
$$

Because of this correspondence, we can test observationality of $T$ in terms of the maps $\varepsilon_f$. Now, these maps alone are in general not enough to test observationality, but their \emph{products} are. 
Let's define what we mean by ``product''. Let's write the unit and multiplication of the monoid $S$ by $e^S:1\to S$ and $m^S:S\times S\to S$. 
Denote now by $m^S_n:S^n\to S$ the maps given by 
\begin{itemize}
 \item $m_0=e$;
 \item $m_1=\id$;
 \item $m_2=m$;
 \item For $m>2$, $m_n$ is the unique (by associativity) way of multiplying $n$ objects, $S\times\dots\times S\to S$. 
\end{itemize}

Given now $h_1,\ldots,h_n : X \to S$, define the \emph{pointwise product} $h_1\cdots h_n$ as the map 
\begin{equation}\label{ptwiseprod}
X \xrightarrow{\Delta_n} X^n \xrightarrow{h_1\times\ldots\times h_n} S^n \xrightarrow{m_n} S .
\end{equation}
In $\cat{Meas}$, if the $h_i$ are functions into $[0,1]$, this gives the ordinary product of functions
$x\mapsto h_1(x)\cdots h_n(x)$.

We can now test observationality by means of the maps $\varepsilon_f$ and their products.
\begin{lemma}\label{usingmonoid}
 Let $R$ be a monoid in $\bC$, and consider the free algebra $(S,s)=(TR,\mu)$ with its induced monoid structure.
 The monad $T$ is $R$-observational (hence observational) if the following maps are jointly monic,
 \begin{equation}\label{dualmaps}
   TTX \xrightarrow{T(\varepsilon_{h_1}\cdots \varepsilon_{h_n})}  TS \xrightarrow{s} S
 \end{equation}
 for all $n \in \bN$ and $h_1,\ldots,h_n : X \to S$.
\end{lemma}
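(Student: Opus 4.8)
The plan is to exhibit each map in the family \eqref{dualmaps} as a post-composition of the corresponding map in the family \eqref{RobjC}. Since joint monicity of a family of maps is inherited by any family through which it factors, joint monicity of \eqref{dualmaps} will then force joint monicity of \eqref{RobjC}, which by the $\bC$-form of the condition is exactly $R$-observationality --- and $R$-observationality implies observationality, as already noted.

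Fix $n \in \bN$ and $h_1,\dots,h_n : X \to TR = S$, and write $m^R_n : R^n \to R$, $m^S_n : S^n \to S$ for the $n$-fold multiplications of $R$ and of the induced monoid $S = TR$. First I would record the identity $m^S_n = T(m^R_n)\circ\nabla_n$, which is precisely \eqref{TRmonoid} for $n \le 2$ and follows in general by a short induction using associativity of $\nabla$ (and of $R$). Unfolding the pointwise product \eqref{ptwiseprod} and using $\varepsilon_{h_i} = s\circ Th_i = \mu_R\circ Th_i$, this yields
\begin{equation*}
  \varepsilon_{h_1}\cdots\varepsilon_{h_n} \;=\; T(m^R_n)\circ\nabla_n\circ(\mu_R)^{\times n}\circ(Th_1,\dots,Th_n) \;:\; TX \to TR ,
\end{equation*}
where $(Th_1,\dots,Th_n) : TX \to (TTR)^n$ is the tupling. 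Next I would use the compatibility of $\nabla$ with $\mu$ (the monoidal-monad law $\mu\circ T\nabla\circ\nabla = \nabla\circ(\mu\otimes\mu)$, iterated over the $n$ factors) to rewrite $\nabla_n\circ(\mu_R)^{\times n} = \mu_{R^n}\circ(\nabla_n\nabla_n)$ with $\nabla_n\nabla_n = T(\nabla_n)\circ\nabla_n$ as in \eqref{RobjC}; then, applying $T$, post-composing with $s = \mu_R$, and simplifying with naturality of $\mu$ (giving $\mu_R\circ TT(m^R_n) = T(m^R_n)\circ\mu_{R^n}$) and monad associativity (the ``$\mu\mu$'' of \eqref{RobjC}), I would arrive at
\begin{equation*}
  s\circ T(\varepsilon_{h_1}\cdots\varepsilon_{h_n}) \;=\; T(m^R_n)\circ\big(\mu\mu\circ T(\nabla_n\nabla_n)\circ T(Th_1,\dots,Th_n)\big) ,
\end{equation*}
that is, exactly $T(m^R_n)$ composed with the $(n,h_1,\dots,h_n)$-member of the family \eqref{RobjC}.

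With this in hand, the conclusion is immediate: if all the maps \eqref{dualmaps} are jointly monic then so are the maps \eqref{RobjC}, since any $u,v$ equalized by every member of \eqref{RobjC} are also equalized by every $T(m^R_n)$-post-composition of one, hence by every member of \eqref{dualmaps}, forcing $u = v$. Thus $T$ is $R$-observational, and hence observational. The bulk of the work --- and the only place I expect any fiddliness --- is the chain of equalities in the middle: establishing $m^S_n = T(m^R_n)\circ\nabla_n$ and then correctly commuting the two copies of $\mu$ past the copies of $\nabla$ using the monoidal-monad axioms and the monad laws.
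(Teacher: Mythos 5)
Your proposal is correct and follows essentially the same route as the paper's proof: both exhibit each map of \eqref{dualmaps} as $T(m^R_n)$ post-composed with the corresponding map of \eqref{RobjC}, using the pointwise-product formula, the identity $m^S_n = T(m^R_n)\circ\nabla_n$ from \eqref{TRmonoid}, naturality of $\mu$, the compatibility of $\nabla$ with $\mu$, and monad associativity, and then conclude joint monicity of \eqref{RobjC} (hence $R$-observationality) from that of \eqref{dualmaps}. The only difference is the order in which the rewriting steps are applied, which is immaterial.
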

Note that we can write the maps above even more concisely as $\varepsilon_{\varepsilon_{h_1}\cdots \varepsilon_{h_n}}$. 

\begin{proof}
 It suffices to show that the maps of
 the family \eqref{dualmaps} can be obtained from the maps of family \eqref{RobjC} by postcomposition (because if the former are jointly monic, then surely the latter have to be as well). 

 Now since $S=TR$ and $s=\mu_R$, we have that $\varepsilon_{h_i}=\mu_R\circ Th_i$, and so we can rewrite \eqref{dualmaps} as follows,
 \begin{align*}
 &\mu \circ T(\mu\circ Th_1\cdots \mu\circ Th_n) \\
 &= \mu \circ T(m^S_n) \circ T(\mu\circ Th_1,\dots, \mu\circ Th_n) \\
 &= \mu \circ TT(m_n^R) \circ T\nabla_n \circ T(\mu^n) \circ T(Th_1,\dots, Th_n) \\
 &= Tm_n^R\circ\mu \circ T\nabla_n \circ T(\mu^n) \circ T(Th_1,\dots, Th_n) \\
 &= Tm_n^R\circ\mu \circ T\mu \circ TT\nabla_n \circ T\nabla_n \circ T(Th_1,\dots, Th_n) \\
 &= Tm_n^R\circ\mu \circ \mu \circ TT\nabla_n \circ T\nabla_n \circ T(Th_1,\dots, Th_n)
 \end{align*}
 where we used, in turn, 
 \begin{itemize}
  \item the pointwise product formula \eqref{ptwiseprod};
  \item the fact that $m_n^S=Tm_n^R \circ\nabla$, from \eqref{TRmonoid};
  \item naturality of $\mu$;
  \item compatibility of the monoidal structure $\nabla$ of $T$ with $\mu$;
  \item the associativity square for $\mu$.
 \end{itemize}
 The last line is exactly the composition of $Tm_n^R$ with \eqref{RobjC}.
\end{proof}

\Cref{usingmonoid} allows us to prove observationality of several monads, including the Giry monad (see \Cref{examples}).

\section{Determinism in the observational case}\label{detmaps}

The key consequence of a monad's being observational is the following.

\begin{theorem}\label{main}
  Let $T$ be an observational commutative monad on a cartesian monoidal category $\bC$.
  Then every deterministic morphism in $\Kl(T)$ is thunkable.
\end{theorem}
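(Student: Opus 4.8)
The plan is to verify thunkability of a deterministic $f:X\rightsquigarrow Y$ directly from the definition, namely that the square \eqref{thunksquare} commutes, i.e.\ $\thunk_Y\klcomp f = (\Tcom f)\klcomp \thunk_X$ as morphisms $X\rightsquigarrow TY$ in $\Kl(T)$ (recall $L=\Tcom$ here), and to prove this equality of two maps with codomain $TY$ by testing it against the jointly monic family $(\samp_n:TY\rightsquigarrow Y^{\otimes n})_{n\in\bN}$ supplied by observationality. So the goal reduces to showing $\samp_n\klcomp\thunk_Y\klcomp f = \samp_n\klcomp(\Tcom f)\klcomp\thunk_X$ for every $n\in\bN$.

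The key preliminary observation is that $\thunk_X$, $\thunk_Y$ and $\Tcom f$ are all \emph{pure}: by \Cref{ex:kleisli-thunk-force} and the definition of $\Tcom$ on morphisms we have $(\thunk_A)^\sharp=\eta_{TA}\circ\eta_A$ and $(\Tcom f)^\sharp=\eta_{TY}\circ(f^\sharp)^\dagger$, each of the form $\eta\circ(-)$ for a morphism of $\bC$. Hence by \Cref{inclusions} all three, as well as $f$ itself, are deterministic, so they are comonoid homomorphisms in $\Kl(T)$, and in particular commute with the iterated copy maps $\cop_n$ (with discardability covering the degenerate case $n=0$, where $\cop_0=\del$). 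I would then record the identity $\samp_n\klcomp\thunk_A = \cop_n$ (the $n$-fold copy of $A$), valid for every object $A$: expand $\samp_n=\force_A^{\otimes n}\klcomp\cop_n$ where this $\cop_n$ copies $TA$, slide $\thunk_A$ past the copy using the comonoid-homomorphism property, and collapse $\force_A^{\otimes n}\klcomp\thunk_A^{\otimes n}=(\force_A\klcomp\thunk_A)^{\otimes n}$ to the identity using functoriality of $\otimes$ on $\Kl(T)$ and the thunk-force axiom $\force_A\klcomp\thunk_A=1_A$.

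With these in hand the computation is short. For the left-hand side, $\samp_n\klcomp\thunk_Y\klcomp f = \cop_n\klcomp f = f^{\otimes n}\klcomp\cop_n$, using the identity above and that $f$ is a comonoid homomorphism. For the right-hand side, since $\Tcom f$ is a comonoid homomorphism and $\force:\Tcom\Rightarrow\id$ is natural (so $\force_Y\klcomp\Tcom f = f\klcomp\force_X$), the same sliding argument gives $\samp_n\klcomp\Tcom f = f^{\otimes n}\klcomp\samp_n$, and then $\samp_n\klcomp\Tcom f\klcomp\thunk_X = f^{\otimes n}\klcomp\samp_n\klcomp\thunk_X = f^{\otimes n}\klcomp\cop_n$. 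The two sides agree, so joint monicity of the family $(\samp_n)_n$ yields $\thunk_Y\klcomp f = (\Tcom f)\klcomp\thunk_X$, i.e.\ $f$ is thunkable.

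I do not expect a genuinely hard step here: the argument is a structured diagram chase, and the one thing that must be got right is the observation that $\thunk$ and $\Tcom f$ are pure --- hence deterministic --- which is precisely what lets the sampling maps ``see through'' them. Two minor points of care are that the monoidal structure on $\Kl(T)$ used to form $\otimes$, $\cop_n$, $\samp_n$ and $\force^{\otimes n}$ is the one induced by commutativity of $T$ (so that $\otimes$ is genuinely a functor there), and that one should first check that a comonoid homomorphism commutes with the higher iterated copies $\cop_n$, not merely with $\cop=\cop_2$ (an easy induction from coassociativity, copyability, and discardability).
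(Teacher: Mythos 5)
Your proposal is correct and follows essentially the same route as the paper's proof: both test the two sides of the thunkability square against the jointly monic family $(\samp_n)_n$, exploiting that $\thunk$ and $\Tcom f$ are pure (hence deterministic) while $f$ is copyable and discardable, together with $\force\klcomp\thunk=1$ and naturality of $\force$. The only difference is organizational: you extract the identity $\samp_n\klcomp\thunk=\cop_n$ and treat all $n$ uniformly, whereas the paper runs the same computation by explicit cases $n=0$, $n=1$, $n\geq 2$.
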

\begin{proof}
  Let $f : A \rightsquigarrow B$ be deterministic.
  It suffices to check that 
  \begin{mathpar}
    A \xrsquigarrow f B \xrsquigarrow\thunk TB
    \and
    A \xrsquigarrow\thunk TA \xrsquigarrow{T^\flat f} TB
  \end{mathpar}
  become equal after postcomposition with each of the $\samp_n$ maps from \eqref{eq:samp-n}.
  For $n = 1$,
  \begin{align*}
    & \samp_1 \klcomp \thunk \klcomp f \\
    ={} & \force \klcomp \thunk \klcomp f\\
    ={} & f \\
    ={} & f \klcomp \force \klcomp \thunk  \\
    ={} & \force \klcomp T^\flat f \klcomp \thunk \\
    ={} & \samp_1 \klcomp T^\flat f \klcomp \thunk .
  \end{align*}
  For $n = 0$, we use the facts that $f$ is discardable and $\thunk$ and $T^\flat(f)$ are both pure (and so discardable):
  \begin{align*}
    & \samp_0 \klcomp \thunk \klcomp f \\
    ={} & \del \klcomp \thunk \klcomp f \\
    ={} & \del \\
    ={} & \del \klcomp T^\flat f \klcomp \thunk \\
    ={} & \samp_0 \klcomp T^\flat f \klcomp \thunk .
  \end{align*}
  For $n \geq 2$, we use the facts that $f$ is copyable and that $\thunk$ and $T^\flat(f)$ are both pure (and so copyable) and the already proved $n = 1$ case.
  \begin{align*}
    & \samp_n \klcomp \thunk \klcomp f \\
    ={} & \force^{\otimes n} \klcomp \cop_n \klcomp \thunk \klcomp f \\
    ={} & \force^{\otimes n} \klcomp \thunk^{\otimes n} \klcomp \cop_n \klcomp f \\
    ={} & \force^{\otimes n} \klcomp \thunk^{\otimes n} \klcomp f^{\otimes n} \klcomp \cop_n \\
    ={} & \force^{\otimes n} \klcomp (T^\flat f)^{\otimes n} \klcomp \thunk^{\otimes n} \klcomp \cop_n \\
    ={} & \force^{\otimes n} \klcomp \cop_n \klcomp T^\flat f \klcomp \thunk \\
    ={} & \samp_n \klcomp T^\flat f \klcomp \thunk. \qedhere
  \end{align*}
\end{proof}

\begin{corollary}\label{thunkdet}
 Let $T$ be an observational commutative monad on a cartesian category. The associated submonad $D$ is equivalently characterizing the \emph{deterministic} morphisms of $\Kl(T)$. 
\end{corollary}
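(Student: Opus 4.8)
The plan is to chain the two inclusions we already have with the construction of \Cref{submonad}. By \Cref{inclusions}, every thunkable morphism of $\Kl(T)$ is deterministic, and this holds for any commutative monad on a cartesian monoidal category. \Cref{main} supplies the reverse inclusion under the observationality hypothesis: every deterministic morphism of $\Kl(T)$ is thunkable. Hence, for an observational commutative monad $T$, the thunkable morphisms and the deterministic morphisms of $\Kl(T)$ form one and the same wide subcategory.

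Now recall from \Cref{submonad} that the submonad $D$ was constructed precisely so that the bijection \eqref{Dbijection} identifies the Kleisli morphisms $X \rightsquigarrow Y$ of $D$ with the thunkable morphisms $X \rightsquigarrow Y$ of $T$, naturally in both variables, and that the induced identity-on-objects functor realizes $\Kl(D)$ as the wide subcategory $\Kl(T)_\thunk$ of $\Kl(T)$. Substituting ``deterministic'' for ``thunkable'' via the first paragraph yields the statement: the Kleisli morphisms of $D$ are exactly the deterministic morphisms of $\Kl(T)$.

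There is no real obstacle here beyond this bookkeeping, which is why the result is stated as a corollary; the one point worth recording is that the identification is compatible with the copy-discard structure, since $\theta : D \Rightarrow T$ is a monad morphism and the copy and discard maps of $\Kl(D)$ and $\Kl(T)$ both come from those of $\bC$, so the embedding $\Kl(D) \hookrightarrow \Kl(T)$ is a CD-functor exhibiting $\Kl(D)$ as the subcategory of deterministic morphisms. (As in \Cref{submonad} we assume the equalizers defining $D$ exist, e.g.\ that $\bC$ has coreflexive equalizers.)
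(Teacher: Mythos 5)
Your argument is exactly the intended one: the paper treats this as an immediate consequence of \Cref{inclusions} (thunkable $\subseteq$ deterministic) and \Cref{main} (deterministic $\subseteq$ thunkable under observationality), combined with the identification of $\Kl(D)$ with $\Kl(T)_\thunk$ from the construction in \Cref{submonad}. Your extra remarks on the existence of the defining equalizers and the compatibility with the CD structure are sound but not needed beyond what the paper already assumes.
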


This will be the case for example for the Giry monad, as we show in in \Cref{exampleM}.

\section{De Finetti's theorem}\label{definetti}

De Finetti's theorem \cite{de-finetti-foresight-its-logical-laws-its-subjective-sources} gives a connection between \emph{random distributions} and \emph{exchangeable sequences}.
As we show in this section, the notion of observationality can say something about de Finetti's theorem, as infinite sequences can be seen as limits of a finite, arbitrarily large amount of observations.

In general, in a Kleisli category we have monoidal products rather than cartesian ones. 
Because of that, in order to talk about infinite sequences in $X$, one cannot take a countable cartesian product of copies of $X$. One, rather, has to extend monoidal products to the infinite case.
This was accomplished in \cite{fritzrischel2019zeroone} for the case of Markov categories. Here we give the analogous construction for CD categories (which is almost the same). For further context, motivation, and applications we refer to the aforementioned source, as well as to the later \cite{definetti-markov}.

\subsection{Kolmogorov products}

In order to form Kolmogorov products, let's take a look at the so-called \emph{marginalization maps}. Given objects $X$ and $Y$ in a CD category, their tensor product
$X\otimes Y$ can be interpreted as the object of \emph{joint states}. For example, if we are in a Kleisli category, a morphism $I\rightsquigarrow X\otimes Y$ corresponds to an arrow $1\to T(X\times Y)$ of the base category, which is a joint probability measure if $T$ is the Giry monad. 
We can then apply the map $ X\otimes Y \xrsquigarrow{\id\otimes\del} X\otimes I \cong X $, which intuitively ``discards'' $Y$. This maps a joint state into the \emph{marginal} state on $X$. For the case of the Giry monad, this corresponds to taking the marginal probability.

Since $\otimes$ is a bifunctor, marginalizations are deterministic, and also natural in the sense that the following diagram commutes.
$$
\begin{tikzcd}
 X\otimes Y \ar[kl]{r}{\id\otimes\del} \ar[kl,swap]{d}{\del\otimes\id} & X\otimes I \ar[kl]{d}{\del\otimes\id} \\
 I\otimes Y \ar[kl,swap]{r}{\id\otimes\del} & I
\end{tikzcd}
$$
The same is true for marginalizations of finite sequences $X_1\otimes\dots\otimes X_n$. 

\begin{definition}
 Let $I$ be an infinite set, and let $\{X_i\}$ be an $I$-indexed collection of objects of a CD category. 
 The \emph{Kolmogorov} product of the family $\{X_i\}$, denoted by $X_I$, is the (cofiltered) limit of the diagram whose objects are the finite tensor products 
 $ \bigotimes_{i\in F} X_i $
 over all the finite subsets $F$ of $I$, and whose morphisms are the marginalizations, if moreover the following two further conditions are satisfied:
 \begin{itemize}
  \item The arrows of the limit cone are deterministic;
  \item The limit is preserved by the tensor product $X\otimes -$ for each object $X$. 
 \end{itemize}
\end{definition}

We call a \emph{Kolmogorov power} a Kolmogorov product where the objects $X_i$ are all isomorphic to a same object $X$. In that case we denote the product by $X^I$.
In the context of de Finetti, we are interested in \emph{countable Kolmogorov powers}, i.e.~where $I$ is, equivalently, the set $\bN$ of natural numbers.

Let's now look at the interaction between observational monads and Kolmogorov powers. 
If the Kolmogorov product $A^\bN$ exists, the maps $\samp_n$ make the following diagram commute for each $n$ and for each marginal projection $\pi:A^{\otimes n}\rightsquigarrow A^{\otimes n-1}$, i.e.~$\pi\klcomp\samp_n=\samp_{n-1}$. 
Therefore, there exist a unique map $\samp_{\bN}$ making the following diagram commute for each $n$.
$$
\begin{tikzcd}[row sep=tiny]
    &  A^{\bN} \ar[kl]{dd}{\pi} \\ 
    TA \ar[kl]{ur}{\samp_{\bN}} \ar[kl]{dr}[swap]{\samp_{n}} \\
    &  A^{\otimes n}
  \end{tikzcd}
$$
This map is monic if and only if the family $\{\samp_n\}$ is jointly monic. 
Therefore, if $\Kl(T)$ has countable Kolmogorov products, $T$ is observable if and only if $\samp_{\bN}$ is monic for every object $A$.

\subsection{Application to the synthetic de Finetti theorem}

In \cite{definetti-markov} a de Finetti theorem was proved for Markov categories satisfying particular conditions. It reads as follows.
\begin{theorem}[Theorem 4.4~in\cite{definetti-markov}]
 Consider a Markov category with countable Kolmogorov powers, conditionals, and almost-surely-compatibly representable, with probability monad $P$.\footnote{For the precise definition of these properties, we refer to the original source.}
 Then a morphism $p: A\rightsquigarrow X^\bN$ is exchangeable if and only if there is a morphism $\rho:A\rightsquigarrow PX$ such that 
 $$
 p = \samp_\bN \klcomp \rho .
 $$
\end{theorem}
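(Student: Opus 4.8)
The plan is to treat this as the recalled result that it is: the statement is Theorem~4.4 of \cite{definetti-markov} verbatim, so I would not reprove it but only indicate the shape of the argument and which hypothesis does what. First I would dispatch the easy (``if'') direction. Suppose $p = \samp_\bN \klcomp \rho$ for some $\rho : A \rightsquigarrow PX$. Exchangeability is closed under precomposition, so it suffices to observe that $\samp_\bN$ itself is exchangeable, i.e.\ invariant under every finite permutation of the coordinates $\bN$. But $\samp_\bN$ is assembled from the maps $\samp_n : PX \rightsquigarrow X^{\otimes n}$, each obtained by iterating the \emph{cocommutative} comultiplication $\cop$ of $PX$ and then applying $\force$ in each leg; cocommutativity makes every $\samp_n$ invariant under permutations of its $n$ output legs, these invariances are compatible with the marginalization maps defining the Kolmogorov power $X^\bN$, and hence the induced map $\samp_\bN : PX \rightsquigarrow X^\bN$ is permutation-invariant.

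The (``only if'') direction is the substance, and it is exactly the content of \cite{definetti-markov}. Given an exchangeable $p : A \rightsquigarrow X^\bN$ one must manufacture a ``directing measure'' $\rho : A \rightsquigarrow PX$ with $\samp_\bN \klcomp \rho = p$. The strategy there is: use the existence of conditionals to condition $p$ on its first $n$ coordinates, obtaining for each $n$ a posterior kernel; exchangeability forces these posteriors to be mutually coherent \emph{up to almost-sure equality} (this is where the ``almost-surely-compatibly representable'' hypothesis is used), so over the cofiltered system defining $X^\bN$ they glue, in the $n \to \infty$ limit, into a kernel whose conditional on each finite block of coordinates is i.i.d.; the representability of $\samp : PX \rightsquigarrow X$ (equivalently of $\samp_\bN$) up to a.s.\ equality then converts this a.s.-i.i.d.\ datum into an honest morphism $\rho : A \rightsquigarrow PX$ with $\samp_\bN \klcomp \rho = p$.

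I expect the genuine obstacle to be precisely this construction of $\rho$: controlling the passage to the infinite Kolmogorov limit while manipulating conditionals only up to almost-sure equality, which is what the three hypotheses (countable Kolmogorov powers, conditionals, almost-surely-compatible representability) are designed to make possible and which \cite{definetti-markov} carries out in full detail. It is worth noting that that argument delivers only the \emph{existence} of $\rho$, not its uniqueness, since in a general Markov category $\samp_\bN$ need not be monic; and because we observed above that $\samp_\bN$ is monic for every object exactly when $T$ is observational, it is precisely here that the observationality hypothesis will let us sharpen the statement to a genuine bijection between such $p$ and such $\rho$.
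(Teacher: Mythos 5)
Your treatment is exactly what the paper does: this theorem is recalled verbatim from \cite{definetti-markov} and not reproved here, so deferring to that source (with your brief sketch of the ``if'' direction and of where conditionals and almost-sure-compatible representability enter the ``only if'' direction) is the right call. You also correctly identify the paper's actual contribution at this point, namely that observationality makes $\samp_\bN$ monic and hence upgrades the existence of $\rho$ to uniqueness.
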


We now know that if (and only if) the probability monad $P$ is observational, then $\samp_\bN$ is monic. Therefore, in the theorem above, we can even conclude that the morphism $\rho$ is unique. 
This is in particular the case for the Giry monad on standard Borel spaces, and so for the Markov category $\cat{BorelStoch}$.
In traditional probability theory, the uniqueness of $\rho$ (at least for states, $A=I$) is already known. Our formalism, however, incorporates this statement of uniqueness into the categorical formalism. 
This opens the road to study de Finetti's theorem and similar statements as categorical universal properties, a path undertaken already (using a different formalism) for example by \cite{definetti-jacobs-staton}.

\section{Examples}\label{examples}

Here we give some important examples of observational monads: the Giry and sub-Giry monads and the lower Vietoris monad of nondeterminism.
We also give an interesting but unusual example with the name generation monad.

\subsection{The Giry monads on measurable spaces}\label{exampleM}

\begin{theorem}\label{Mobserv}
  The Giry monad and the monad $M$ of subprobability measures are observational. 
\end{theorem}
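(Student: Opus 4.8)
The plan is to apply \Cref{usingmonoid} with a conveniently chosen result object. For the probability monad $P$, take $R=\{0,1\}$ with its multiplicative (``and'') monoid structure, so that the free algebra $S=P\{0,1\}$ is (isomorphic to) $[0,1]$ with multiplication and $s=\mu_R$ is the map sending a measure on $[0,1]$ to its mean; for the subprobability monad $M$, take $R=1$, so that $S=M1\cong[0,1]$ in the same way. In both cases a morphism $h\colon X\to S$ is just a measurable function $X\to[0,1]$, the map $\varepsilon_h\colon TX\to[0,1]$ is integration $p\mapsto\int_X h\,dp=:\langle h,p\rangle$, and the map $\varepsilon_{\varepsilon_{h_1}\cdots\varepsilon_{h_n}}\colon TTX\to[0,1]$ of \eqref{dualmaps} works out to
$$
\Xi\ \longmapsto\ \int_{TX}\Big(\prod_{i=1}^n\langle h_i,p\rangle\Big)\,\Xi(dp).
$$
By \Cref{usingmonoid} it therefore suffices to prove that, for each measurable space $X$, this family of maps (indexed by $n\in\bN$ and measurable $h_1,\dots,h_n\colon X\to[0,1]$) is jointly monic in $\cat{Meas}$. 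Since monomorphisms in $\cat{Meas}$ are exactly the injective measurable functions (test against maps out of a singleton), a family of maps out of $TTX$ is jointly monic iff it separates the points of $TTX$. So I must show: if $\Xi,\Xi'\in TTX$ give equal values under all of these maps, then $\Xi=\Xi'$.

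The next step is to read these values as moments. Fix measurable $g_1,\dots,g_k\colon X\to[0,1]$ and, for a multi-index $(m_1,\dots,m_k)\in\bN^k$, instantiate the family above at the list consisting of $g_j$ repeated $m_j$ times ($j=1,\dots,k$); this yields
$$
\int_{TX}\prod_{j=1}^k\langle g_j,p\rangle^{m_j}\,\Xi(dp)\ =\ \int_{TX}\prod_{j=1}^k\langle g_j,p\rangle^{m_j}\,\Xi'(dp).
$$
Writing $\phi\colon TX\to[0,1]^k$ for the (measurable) map $p\mapsto(\langle g_1,p\rangle,\dots,\langle g_k,p\rangle)$, this says precisely that the finite pushforward measures $\phi_*\Xi$ and $\phi_*\Xi'$ on the compact cube $[0,1]^k$ have equal moments of all orders. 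By determinacy of finite Borel measures on a compact cube by their moments --- concretely, Stone--Weierstrass gives density of polynomials in $C([0,1]^k)$, so equality of all moments forces $\int f\,d(\phi_*\Xi)=\int f\,d(\phi_*\Xi')$ for every continuous $f$, whence $\phi_*\Xi=\phi_*\Xi'$ by regularity --- we conclude that $\Xi$ and $\Xi'$ have the same pushforward along every such $\phi$.

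To finish, I specialise to indicator functions $g_j=\mathbf{1}_{A_j}$ for measurable $A_j\subseteq X$, so that $\langle g_j,p\rangle=p(A_j)$. The previous paragraph then shows that $\Xi$ and $\Xi'$ agree on every set of the form $\{p\in TX:p(A_1)\in B_1,\dots,p(A_k)\in B_k\}$ with $B_j\subseteq[0,1]$ Borel. These sets form a $\pi$-system which generates the $\sigma$-algebra of $TX$ (by construction generated by the evaluation maps $p\mapsto p(A)$) and which contains $TX$ itself, so $\Xi$ and $\Xi'$ in particular have the same total mass; Dynkin's $\pi$--$\lambda$ lemma then gives $\Xi=\Xi'$, and the argument is uniform in $P$ and $M$. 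The one genuinely analytic ingredient here is the moment-determinacy step of the middle paragraph; everything else is bookkeeping, and the main thing to get right is that the displayed maps really are the maps of \eqref{dualmaps} and that the stated $\pi$-system generates $\Sigma_{TX}$ --- routine, but worth spelling out.
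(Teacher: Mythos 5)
Your proof is correct, and it diverges from the paper's at the key analytic step. Both arguments begin identically: invoke \Cref{usingmonoid} with $S=[0,1]$ (realised as $P\{0,1\}$ for $P$ and $M1$ for $M$), so that the maps \eqref{dualmaps} become $\Xi\mapsto\int_{TX}\prod_i\bigl(\int_X h_i\,dp\bigr)\,\Xi(dp)$, and the task reduces to showing these functionals separate the points of $TTX$. The paper finishes in one stroke with the functional monotone class theorem (\Cref{monclass}): the finite products of the maps $\varepsilon_f$ form a multiplicative class generating the Giry $\sigma$-algebra of $TX$, so any two measures agreeing on them agree on all bounded measurable functions. You instead exploit that repeating the $h_i$ gives all monomial moments of the pushforward of $\Xi$ along $\phi=(\varepsilon_{g_1},\dots,\varepsilon_{g_k})\colon TX\to[0,1]^k$, settle the multivariate Hausdorff moment problem on the compact cube via Stone--Weierstrass and regularity, then specialise to indicators and conclude with Dynkin's $\pi$--$\lambda$ lemma, using that the evaluation maps $p\mapsto p(A)$ generate $\Sigma_{TX}$ and that the $n=0$ case fixes total masses (needed for the $\pi$--$\lambda$ argument on finite measures). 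The trade-off: the paper's route is shorter once \Cref{monclass} is accepted as a black box and works directly with functions on $TX$; yours uses only classical ingredients (moment determinacy on $[0,1]^k$ and the set-level $\pi$--$\lambda$ lemma) and makes the ``test by repeated independent observations'' intuition explicit as a moment-determinacy statement, at the cost of the extra pushforward bookkeeping. All the steps you flag as routine do check out, so the argument stands as a valid alternative proof of \Cref{Mobserv}.
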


The proof uses the following version of the celebrated functional monotone class theorem, (a.k.a.~the ``$\pi$-$\lambda$ theorem for functions''). See \cite[Theorem~2.12.9]{bogachev} for a reference.

\begin{theorem}[Functional monotone class theorem]\label{monclass}
 Let $X$ be a measurable space, and denote by $F(X)$ the space of bounded measurable functions $X\to\bR$. 
 Consider a vector subspace $H\subseteq F(X)$ containing the function $1$, and such that its positive cone is closed under sequential increasing limits. 
 Consider a subset $K\subseteq H$ closed under pointwise products, and denote by $\sigma(K)$ the $\sigma$-algebra generated by the functions in $K$. 
 Then $H$ contains all $\sigma(K)$-measurable functions.
\end{theorem}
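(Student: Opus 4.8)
The plan is to reduce the functional statement to the set-theoretic Dynkin $\pi$-$\lambda$ theorem, applied to the class $\mathcal L = \{A \subseteq X : \mathbf 1_A \in H\}$. First I would replace $K$ by the algebra $\mathcal A$ it generates together with the constants, i.e.\ the linear span of $1$ and of all finite products of elements of $K$. Since $K$ is closed under products, every such product already lies in $K$, so $\mathcal A$ is a vector space of bounded measurable functions closed under multiplication, and expanding a product $(\sum_i a_i k_i + c)(\sum_j b_j k_j + d)$ shows it is an algebra contained in $H$. As $K \subseteq \mathcal A$ and every element of $\mathcal A$ is a polynomial in elements of $K$, we have $\sigma(\mathcal A) = \sigma(K)$, so it suffices to work with $\mathcal A$ from now on.

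Next I would record the two structural facts about $H$ that drive the argument. The first is that $\mathcal L$ is a $\lambda$-system: $X \in \mathcal L$ since $1 \in H$; it is closed under proper differences because $\mathbf 1_{B \setminus A} = \mathbf 1_B - \mathbf 1_A$ and $H$ is a vector space; and it is closed under countable increasing unions because $\mathbf 1_{A_n} \uparrow \mathbf 1_A$ is an increasing limit in the positive cone. The second, more delicate, fact is that $H$ is closed under uniform limits. Given $g_n \to g$ uniformly with $g_n \in H$, I would pass to a subsequence with $\|g_{n+1} - g_n\|_\infty \le 2^{-n}$ and set $w_n = g_n - 2^{-n+1}\cdot 1$; the estimate $g_{n+1} - g_n \ge -2^{-n}$ makes $(w_n)$ increasing with limit $g$, and after adding a large constant to move into the positive cone, the increasing-limit hypothesis yields $g \in H$. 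Combining uniform closure with the Weierstrass approximation theorem then gives the crucial upgrade: for any $f_1,\ldots,f_k \in \mathcal A$ and any continuous $\Phi : \bR^k \to \bR$, the composite $\Phi(f_1,\ldots,f_k)$ lies in $H$, since $\Phi$ is a uniform limit on the relevant compact box of polynomials, and polynomials in $f_1,\ldots,f_k$ stay inside the algebra $\mathcal A \subseteq H$.

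With this tool I would show that every open cylinder $\{f_1 > a_1\} \cap \cdots \cap \{f_k > a_k\}$ with $f_i \in \mathcal A$ lies in $\mathcal L$: choose continuous $\Phi_n : \bR^k \to [0,1]$ increasing pointwise to the indicator of the open box $\prod_i (a_i, \infty)$, so that $\Phi_n(f_1,\ldots,f_k) \in H$ increases in the positive cone to the indicator of the cylinder, which therefore lies in $H$. These cylinders form a $\pi$-system $\mathcal P$, and they generate $\sigma(\mathcal A) = \sigma(K)$ because the half-spaces $\{f > a\}$ already generate the smallest $\sigma$-algebra making each $f \in \mathcal A$ measurable. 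The Dynkin $\pi$-$\lambda$ theorem then gives $\sigma(K) = \sigma(\mathcal P) \subseteq \mathcal L$, i.e.\ $\mathbf 1_A \in H$ for every $A \in \sigma(K)$. Finally, any bounded $\sigma(K)$-measurable $g$ splits as $g = g^+ - g^-$, each summand is an increasing pointwise limit of nonnegative simple functions built from such indicators, these simple functions lie in $H$ by linearity, and the increasing-limit hypothesis places $g^+, g^-$ and hence $g$ in $H$.

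The main obstacle I anticipate is the passage from the algebra $\mathcal A \subseteq H$ to the indicators of a generating $\pi$-system, precisely because $H$ is assumed to be only a vector space closed under \emph{monotone} limits, not under uniform limits or products. A naive attempt using powers $f^n \downarrow \mathbf 1_{\{f=1\}}$ does place some indicators in $H$, but the sets $\{f=1\}$ it reaches are too sparse to generate $\sigma(K)$: a nonconstant polynomial attains its maximum only finitely often, so one never recovers a half-line by countable operations. This is what forces the two-step detour through uniform-limit closure and Weierstrass, and the telescoping-subsequence argument establishing uniform closure from the bare monotone hypothesis is the technical crux of the whole proof.
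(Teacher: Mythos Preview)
The paper does not actually prove \Cref{monclass}: it is quoted as a known result with a reference to Bogachev~\cite[Theorem~2.12.9]{bogachev}, and is then used as a black box in the proof of \Cref{Mobserv}. So there is no paper proof to compare your proposal against directly.

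That said, your argument is correct and is essentially the standard proof one finds in the literature. The key steps --- passing to the unital algebra generated by $K$, establishing that $H$ is closed under \emph{uniform} limits via the telescoping trick $w_n = g_n - 2^{-n+1}$, invoking Weierstrass to push continuous functions of tuples from $\mathcal A$ into $H$, using these to approximate indicators of a generating $\pi$-system from below, and then applying Dynkin's $\pi$-$\lambda$ theorem to the $\lambda$-system $\mathcal L = \{A : \mathbf 1_A \in H\}$ --- are all sound. Your identification of the uniform-closure step as the technical crux is also accurate: it is exactly what bridges the gap between the algebra $\mathcal A$ and the indicators needed for the $\pi$-$\lambda$ argument. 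One small point worth making explicit is that the increasing-limit hypothesis on the positive cone tacitly requires the limit to remain bounded (i.e.\ to lie in $F(X)$); you use this implicitly and correctly, since uniform convergence to a bounded $g$ keeps everything in $F(X)$.
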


In order to prove \Cref{Mobserv}, let's also recall that given a measurable set $X$, the (``Giry'') $\sigma$-algebra of $MX$ is equivalently generated by the functions 
$$ 
\begin{tikzcd}[row sep=0]
 MX \ar{r}{\varepsilon_f} & {[0,1]} \\
 m \ar[mapsto]{r} & \int f \, dm
\end{tikzcd}
$$
for each measurable function $f:X\to[0,1]$ (see~\cite{giry}). 

Let's now prove our theorem.

\begin{proof}[Proof of \Cref{Mobserv}]
Let's prove the assert for $M$, the case of $P$ is analogous. We will use \Cref{usingmonoid} with $S=[0,1]$ (i.e.~result object $R=1$). So it suffices to show that the maps \eqref{dualmaps} are jointly monic. 
Let's unpack the expression \eqref{dualmaps} for our case. We need to show that on a measurable space $X$, if $p$ and $q$ are measures \emph{on} $MX$, i.e.~\emph{in} $MMX$, then $p=q$ if and only if $p$ and $q$ agree on the following products,
 \begin{equation}\label{agreeonproducts}
  \int_{MX} \varepsilon_{h_1} \cdots \varepsilon_{h_n} \, dp = \int_{MX} \varepsilon_{h_1} \cdots \varepsilon_{h_n} \, dq 
 \end{equation}
 for each finite collection $\{h_1,\dots,h_n\}$ of measurable functions $X\to[0,1]$, including the 0-ary product (the function $1$, meaning that $p$ and $q$ have the same normalization).
 
 Now consider the collection $K$ of all finite pointwise products of the functions $\varepsilon_f$,
$$
K \coloneqq \left\{ \varepsilon_{f_1} \cdots \varepsilon_{f_n} : n\in\bN, f_i\in F(X), \lambda\in\bR \right\} .
$$
This collection generates the sigma algebra of $MX$, and it is closed under products. 
Take now the measures $p,q$ \emph{on} $MX$, i.e.~\emph{in} $MMX$. Denote by $H$ the subset of bounded measurable functions $g:MX\to\bR$ such that $\int g \, dp = \int g\, dq$. 
By linearity and monotone continuity of integration, $H$ is a vector subspace of $F(MX)$ and its positive cone is closed under sequential increasing limits. Moreover, it contains $1$.
Since $p$ and $q$ agree on $K$ (by \eqref{agreeonproducts}), $K\subseteq H$, and so we are in the hypothesis of \Cref{monclass}. The theorem tells us that every function that is measurable for the $\sigma$-algebra generated by $K$ lies in $H$, i.e.~cannot tell $p$ and $q$ apart. But since $K$ generates the whole $\sigma$-algebra of $MX$, this means that $p=q$.  
\end{proof}

\begin{corollary}
  The (sub-)Giry monad admits an idempotent submonad of deterministic states.
\end{corollary}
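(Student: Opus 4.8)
The plan is to assemble three facts already in hand. First, by \Cref{Mobserv} the monads $P$ and $M$ are observational commutative monads on the cartesian category $\cat{Meas}$; since $\cat{Meas}$ has coreflexive equalizers, the construction of \Cref{submonad} applies to each of them and produces the submonad $D$ of thunkable morphisms, with $\Kl(D) \cong \Kl(T)_\thunk$. Second, by \Cref{thunkdet} (which rests on \Cref{main}) observationality forces the thunkable morphisms of $\Kl(T)$ to coincide with the deterministic ones, so $\Kl(D)$ is exactly the wide subcategory of deterministic morphisms of $\Kl(T)$; restricting to morphisms out of the monoidal unit $1$, the states of $D$ over an object $X$ — i.e.\ the points of $DX$ — are precisely the deterministic states of $TX$. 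For $M$ these are the $\{0,1\}$-valued subprobability measures, which need not all be Dirac deltas, as \Cref{zeroone} shows; this is what justifies calling $D$ the submonad of deterministic states. Third, by \Cref{thmidempotent} this submonad $D$ is idempotent in both the $P$ and the $M$ cases. Putting the three together yields the corollary.

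I anticipate no genuine obstacle: the substantive content has already been carried out — observationality via the functional monotone class theorem in \Cref{Mobserv}, and idempotence via preservation of regular monomorphisms by the Giry functor in \Cref{thmidempotent} (through \Cref{lem:d-idempotent-iff-t-theta-monic}) — so the corollary is the bookkeeping step recording their conjunction. The one point meriting a sentence of care is the definite article in \emph{the} submonad of deterministic states: since $\Kl(D) \hookrightarrow \Kl(T)$ is identity-on-objects and full onto the deterministic morphisms, any submonad of $T$ with this property is canonically isomorphic to $D$, so it is legitimate to single $D$ out as the idempotent submonad of deterministic states.
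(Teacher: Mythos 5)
Your proposal is correct and follows essentially the same route as the paper's proof, which likewise just combines \Cref{thmidempotent} (idempotence of the thunkable submonad $D$ for $P$ and $M$), \Cref{Mobserv} (observationality), and \Cref{main}/\Cref{thunkdet} (thunkable $=$ deterministic) to identify $D$ with the submonad of zero-one (deterministic) states. Your extra remarks on the existence of the relevant equalizers in $\cat{Meas}$ and on the essential uniqueness of such a submonad are fine but not needed for the paper's argument.
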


\begin{proof}
 We know from \Cref{thmidempotent} that the monad $D$ of thunkable morphisms is an idempotent submonad of $M$ (resp.~$P$).
 Since $M$ (resp.~$P$) is observational, we know by \Cref{main} that thunkable and deterministic morphisms coincide. 
 Therefore we can equivalently view $D$ as the monad whose Kleisli morphisms are deterministic (i.e.~zero-one) Markov kernels. In particular, the elements of $DX$ are the zero-one measures on $X$.
\end{proof}

\subsection{The lower Vietoris monad on topological spaces}\label{exampleH}

\begin{theorem}\label{Hobserv}
 The lower Vietoris monad on $\cat{Top}$ (a.k.a.~the Hoare powerdomain) is observational. 
\end{theorem}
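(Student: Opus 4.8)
The plan is to apply \Cref{usingmonoid} with the terminal object $R = 1$ as result object, carried by its unique (trivial) monoid structure. Then $S = T1 = H1$, which for the lower Vietoris monad is the Sierpiński space $\bS$: its points are the two closed subsets $\emptyset$ and $\{\ast\}$ of the one-point space, and the non-trivial open of $H1$ is the singleton $\{\{\ast\}\}$. The induced monoid on $S$ has unit $\top := \eta_1(\ast) = \{\ast\}$ and multiplication $\nabla_{1,1} \colon H1 \times H1 \to H1$; since the monoidal structure of $H$ is given by $\nabla_{X,Y}(C,D) = C \times D$, this multiplication is binary meet on $\bS$ (with $\emptyset$ absorbing), mirroring the Giry case in which $S = M1 = [0,1]$ is a monoid under multiplication.

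Next I would unwind the relevant maps using the usual concrete description of $H$: $HX$ is the set of closed subsets of $X$ with the topology whose subbasic opens are $\langle U \rangle := \{ C : C \cap U \neq \emptyset \}$ for $U$ open in $X$; $\eta_X(x) = \overline{\{x\}}$; $\mu_X(\mathcal{C}) = \overline{\bigcup \mathcal{C}}$; and $\nabla_{X,Y}(C,D) = C \times D$. A continuous map $h \colon X \to S = H1$ is the same as an open set $U_h := h^{-1}(\top)$ of $X$, and a short computation from $\varepsilon_h = \mu_1 \circ Hh \colon HX \to H1$ shows that $\varepsilon_h$ is the $\bS$-valued characteristic map of the subbasic open $\langle U_h \rangle \subseteq HX$. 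Taking the pointwise product, $\varepsilon_{h_1} \cdots \varepsilon_{h_n}$ is then the characteristic map of the basic open $\langle U_1 \rangle \cap \dots \cap \langle U_n \rangle$, and after pushing forward along it and applying $s = \mu_1$ once more, the map \eqref{dualmaps} sends $p \in HHX$ to $\top$ precisely when $p$, viewed as a closed subset of $HX$, meets $\langle U_1 \rangle \cap \dots \cap \langle U_n \rangle$, and to $\emptyset$ otherwise.

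It then remains to check that this family, indexed by $n \in \bN$ and tuples $(U_1, \dots, U_n)$ of opens of $X$, is jointly monic in $\cat{Top}$ --- equivalently, jointly injective on $HHX$, since the monomorphisms of $\cat{Top}$ are the injections. This is elementary point-set topology: the sets $\langle U_1 \rangle \cap \dots \cap \langle U_n \rangle$ form a basis of the lower Vietoris topology on $HX$; a closed set meets a union of opens iff it meets one of them, so knowing which \emph{basic} opens a closed set hits determines which opens it hits; and a closed subset $p$ of any space is determined by the family of opens it meets (if $p$ and $q$ meet the same opens and $x \in p$, then every open neighbourhood of $x$ meets $q$, so $x \in \overline{q} = q$). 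Hence distinct $p, q \in HHX$ are separated by some member of the family, and \Cref{usingmonoid} gives that $H$ is $1$-observational, and in particular observational. The only delicate part is the bookkeeping of the middle paragraph --- getting $\nabla$, $\mu$ and $\varepsilon_h$ right and keeping straight which point of $\bS$ is open --- while the decisive step reduces to the triviality that a closed set is pinned down by the opens it meets.
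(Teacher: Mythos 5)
Your proposal is correct and follows essentially the same route as the paper: apply \Cref{usingmonoid} with result object $R=1$, so $S=H1$ is the Sierpi\'nski space with meet as monoid multiplication, identify the maps \eqref{dualmaps} as testing whether a closed subset of $HX$ meets a basic open $\langle U_1\rangle\cap\dots\cap\langle U_n\rangle$, and conclude from the fact that closed sets are determined by the (basic) opens they meet. The only cosmetic difference is that you prove this last fact directly, whereas the paper invokes \Cref{pilambdaH}.
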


Again as result object we take the terminal object $1$, so that $S=H1$ is the Sierpinski space $\{0,1\}$, equipped with its usual topology (generated by $\{1\}$).
This way, a continuous function $f:X\to\{0,1\}$ is equivalently and open subset of $X$ (by taking $f^{-1}(1)$). 

We use the following statement, \cite[Lemma~2.3]{fritz2019support}.

\begin{proposition}\label{pilambdaH}
 Let $X$ be a topological space. Let $\bB$ be a basis of the topology of $X$. Let $C$ and $D$ be closed subsets of $X$, i.e.~elements of $HX$. Then $C=D$ if and only if for every open $U$ in $\bB$, the set $C$ intersects $U$ if and only if $D$ does.
\end{proposition}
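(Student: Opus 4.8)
The forward implication is immediate: if $C = D$ then $C \cap U = D \cap U$ for every open $U$, so one of these is nonempty exactly when the other is. For the converse my plan is to observe that a closed subset of $X$ is completely recovered from the collection of basic opens that it meets (equivalently, from those it is disjoint from), by an explicit formula depending only on that collection. Concretely, I would first establish the identity
\[
 X \setminus C \;=\; \bigcup \bigl\{\, U \in \bB : U \cap C = \emptyset \,\bigr\}
\]
for any closed $C \subseteq X$. Granting this, if $C$ and $D$ meet exactly the same members of $\bB$ --- which is the hypothesis of the converse --- then $\{U \in \bB : U \cap C = \emptyset\}$ and $\{U \in \bB : U \cap D = \emptyset\}$ are the same subcollection of $\bB$, so the right-hand side of the identity is the same open set whether computed from $C$ or from $D$; hence $X \setminus C = X \setminus D$, and therefore $C = D$.

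It remains to prove the displayed identity. The inclusion $\supseteq$ holds because any $U \in \bB$ with $U \cap C = \emptyset$ is contained in $X \setminus C$. For $\subseteq$, let $x \in X \setminus C$; since $C$ is closed the set $X \setminus C$ is open, and since $\bB$ is a basis there is some $U \in \bB$ with $x \in U \subseteq X \setminus C$, that is, $x \in U$ and $U \cap C = \emptyset$, so $x$ belongs to the union on the right. That is the whole argument.

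I do not expect any real obstacle, as the statement is purely point-set-topological. The converse can equally be run contrapositively: if $C \neq D$, choose (up to swapping the two sets) a point $x \in C \setminus D$, use openness of $X \setminus D$ together with the basis property to obtain $U \in \bB$ with $x \in U \subseteq X \setminus D$, and observe that $C$ then meets $U$ while $D$ does not. Conceptually this is simply the fact that, in the lower Vietoris picture, a closed set is faithfully encoded by the datum ``which basic opens does it hit'', which is the feature making $HX$ behave like a subspace of a continuation and which is used repeatedly in \Cref{examples}.
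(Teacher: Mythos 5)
Your proof is correct: the identity $X \setminus C = \bigcup\{U \in \bB : U \cap C = \emptyset\}$ (or equivalently the contrapositive argument via a basic open separating a point of $C \setminus D$ from $D$) is exactly the standard point-set argument, and both directions are handled soundly. Note that the paper itself gives no proof of this proposition but simply imports it as Lemma~2.3 of the cited reference on the lower Vietoris monad, so there is nothing further to compare against; your elementary argument is precisely what that lemma's proof amounts to.
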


Moreover, the topology of $HX$ is the weakest topology making the following maps continuous for all open sets $U$ of $X$, 
\begin{equation}\label{basisHX}
\begin{tikzcd}[row sep=0]
 HX \ar{r}{\varepsilon_U} & \{0,1\} \\
 C \ar[mapsto]{r} & \begin{cases}
                     1 \quad C\cap U \ne \varnothing ;\\
                     0 \quad C\cap U = \varnothing .
                    \end{cases}
\end{tikzcd}
\end{equation}
As maps into $\{0,1\}$ corresponds to open sets, we can view the $\varepsilon_U$ as open sets of $HX$ (generating the topology). 
For more details, see again \cite[Section~2]{fritz2019support}.

Let's now prove the theorem. In some sense, the role of the $\pi$-$\lambda$ theorem this time is played by \Cref{pilambdaH}.

\begin{proof}[Proof of \Cref{Hobserv}]
 We use \Cref{usingmonoid} where $S$ is the Sierpinski space, with monoid structure given by `meet', and $H$-algebra structure, i.e.~(topological) sup-semilattice structure, given by `join'.
 It suffices to show that the maps \eqref{dualmaps} are jointly monic.
 Unpacking the expression of \Cref{usingmonoid}, and using \eqref{basisHX}, given closed subsets $C,D\subseteq HX$ we have to show that they are equal if and only if for all $n$ and for all open sets $U_1,\subseteq U_n\subseteq X$, the set $C$ intersects the intersection
 \begin{equation}\label{intersections}
 \varepsilon_{U_1} \cap \cdots \cap \varepsilon_{U_n}
 \end{equation}
 if and only $D$ does. Now, as the topology of $HX$ is generated by the $\varepsilon_U$, the $\varepsilon_U$ for a subbasis, and hence their intersection form a basis. By \Cref{pilambdaH}, then, the sets \eqref{intersections} are indeed enough to test that $C=D$. 
\end{proof}

In particular, Kleisli morphisms for the monad $H$ are deterministic if and only if they are thunkable. 

We are now ready to prove \Cref{indeedsober}, i.e.~that for the monad $H$, sober objects are exactly sober topological spaces.

\begin{proof}[Proof of \Cref{indeedsober}]
 Let $X$ be a topological space, and let $C\in HX$ be a closed subset of $X$. We have to prove that $C$ is irreducible if and only if it is thunkable as a morphism $1\to HX$ of $\cat{Top}$, or equivalently, deterministic.
 Now first of all, $C$ is discardable as a Kleisli morphism of $H$ if and only if $C$, as a set, is nonempty. 
 Moreover, $C$ as a morphism is copyable if and only if the closed subsets $C\times C$ and $\cop(C)=\{(c,c): c\in C\}$ of $X\times X$ are equal. 
 By \Cref{pilambdaH} we can test equalities of closed subsets of $X\times X$ by looking at a basis, and we pick the basis given by the products $U\times V$ of open subsets $U,V\subseteq X$.
 We have that $C\times C$ intersects $U\times V$ if and only if $C$ intersects $U$ and $V$ separately, and that $\cop(C)$ intersects $U\times V$ if and only if $C$ intersects $U$ and $V$ at the same point, i.e.~if $C\cap U\cap V$ is nonempty. In other words,
 $$
 (C\times C) \cap (U\times V) = \cop(C) \cap (U\times V) ,
 $$
 for all $U$ and $V$, and so $C\times C$ and $\cop(C)$ are equal, if and only if the mapping on open sets
 $$
 \begin{tikzcd}[row sep=0]
 O(X) \ar{r} & \{0,1\} \\
 U \ar[mapsto]{r} & \begin{cases}
                     1 \quad C\cap U \ne \varnothing ;\\
                     0 \quad C\cap U = \varnothing .
                    \end{cases}
\end{tikzcd}
$$
preserves binary intersections. That is, $C$ induces a completely prime filter on the frame $O(X)$, which means precisely that $C$ is irreducible.
\end{proof}

\subsection{Local names}

In \cite{sabok-staton-stein-wolman-probabilistic-programming-semantics-for-name-generation} it was shown that randomness can be used to model fresh name generation, exploiting the fact that repeated uniform sampling from $[0,1]$ returns distinct values with probability 1.
Conversely, the traditional model of local names from \cite{stark-categorical-models-for-local-names} is an interesting model of categorical probability.
We take $\cat D = [\inj,\sets]$ and the \emph{name-generation monad}
\begin{displaymath}
  (TX)a = \colim_{b \in \inj} X(a + b).
\end{displaymath}
If a `stage of computation' means the (finite) set of names which are in use so far, an object $X \in [\inj,\sets]$ consists of a set of values at each stage.
A value of $TX$ at stage $a$ consists of a set $b$ of local names that have been generated together with a value of $X$ at the resulting stage $a+b$.
The object $TX$ is quotiented so that: 1) the names in $b$ are `bound' or $\alpha$-convertible; 2) names that are not referenced in the value of $X$ are out of scope and so discarded.

The monad $T$ is commutative and affine (satisfies $T1 \cong 1$).
It does not satisfy the equalizing requirement, but it is observational.

\begin{theorem}
  The monad $T$ on $[\inj,\sets]$ is observational.
\end{theorem}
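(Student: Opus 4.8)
The plan is to unfold the colimit defining $T$ and check joint monicity of the sampling maps one stage at a time. By \Cref{monicinC}, the family $(\samp_n)_{n\in\bN}$ is jointly monic in $\Kl(T)$ iff the maps \eqref{eq:samp-n-in-C}, which I will call $\Phi_n\colon TTX\to T(X^n)$, are jointly monic in $\bC=[\inj,\sets]$; and since monomorphisms in a presheaf category are exactly the pointwise-injective natural transformations, it suffices to fix a finite set $a$ and show that the family $\bigl((\Phi_n)_a\colon (TTX)a\to (TX^n)a\bigr)_n$ is jointly injective. Because $\inj$ has pushouts of monomorphisms, the colimit $(TX)a=\colim_{b\in\inj}X(a+b)$ may be described concretely as the set of pairs $(b,z)$ with $z\in X(a+b)$, modulo the relation identifying $(b,z)$ with $(b',z')$ whenever $X(\id_a+\theta)(z)=X(\id_a+\theta')(z')$ for some injections $\theta\colon b\hookrightarrow b''$, $\theta'\colon b'\hookrightarrow b''$. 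Iterating, an element of $(TTX)a$ is represented by a triple $(b,c,z)$ with $z\in X(a+b+c)$ (with $b$ the names bound at the outer stage and $c$ those bound at the inner stage), and a direct calculation gives
\[
  \Phi_n(b,c,z)\;=\;\bigl(\, b+c^{(1)}+\dots+c^{(n)},\;(X(\iota_1)z,\dots,X(\iota_n)z)\,\bigr),
\]
where $c^{(1)},\dots,c^{(n)}$ are disjoint copies of $c$ and $\iota_i\colon a+b+c\hookrightarrow a+b+c^{(1)}+\dots+c^{(n)}$ is the identity on $a+b$ and identifies $c$ with $c^{(i)}$. Intuitively, $\Phi_n$ runs the computation $n$ times: the outer names $b$ are generated once and shared, while the inner names $c$ are regenerated afresh in each run.

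The heart of the argument is then that $\Phi_0,\Phi_1,\Phi_2$ already suffice, because $\Phi_2$ \emph{marks} which local names are outer (they occur in both runs in the same way) and which are inner (they occur in only one run each). Suppose $\Phi_n(b,c,z)=\Phi_n(b',c',z')$ for all $n$. When $X$ preserves pullbacks --- equivalently, every element of every $X(s)$ has a least support --- the argument is short: writing a witness of $\Phi_2(b,c,z)=\Phi_2(b',c',z')$ as a cospan into some $X(a+d)$, the least supports of the two components of the common value are $a\sqcup B\sqcup C_1$ and $a\sqcup B\sqcup C_2$, where $B$ is the common image of $b$ (resp.\ of $b'$) and $C_1,C_2$ (resp.\ $C_1',C_2'$) are disjoint images of $c$ (resp.\ of $c'$); intersecting the two supports forces $B=B'$, hence also $C_1=C_1'$, and the resulting reindexings of $z$ and $z'$ onto $a\sqcup B\sqcup C_1$ differ by a block-respecting bijection $a+b+c\cong a+b'+c'$, which is precisely a witness that $(b,c,z)=(b',c',z')$ in $(TTX)a$. (Here $\Phi_0=\del$ contributes nothing, since $T$ is affine, while $\Phi_1=\mu$ is used only as an auxiliary constraint and is not itself injective.)

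The hard part will be the general case: an object of $[\inj,\sets]$ need not preserve pullbacks, so its supports need not be closed under intersection and the ``least support'' step above is unavailable. What remains is to carry out the same recovery purely at the level of the colimit equivalence relation: I would show that any cospan witnessing $\Phi_2(b,c,z)=\Phi_2(b',c',z')$ can be transported --- using the disjointness of the copies $c^{(1)}$ and $c^{(2)}$, together with the $\Phi_1$-equation --- into a \emph{block-respecting} cospan $a+b+c\to a+b''+c''\leftarrow a+b'+c'$, i.e.\ a witness of the $(TTX)a$-equivalence. This is the combinatorial bookkeeping where the content lies, and I expect it to be the main technical obstacle. An alternative route, paralleling the treatment of the Giry and lower Vietoris monads, is to apply \Cref{usingmonoid} with a result object carrying a monoid structure --- here the presheaf $1+N$ of ``a name or $\ast$'', with $N$ the inclusion $\inj\hookrightarrow\sets$ --- replacing the monotone-class/$\pi$-$\lambda$ input by the corresponding separation statement for $\inj$-presheaves; but either way the same name-tracking just sketched is what must be established.
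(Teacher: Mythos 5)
Your reduction to pointwise joint injectivity of the maps $\Phi_n$ in $[\inj,\sets]$ and your explicit description of $\Phi_n$ on representatives $[b,[c,z]]$ are correct and agree with the setup of the paper's proof. But the proof has a genuine gap, and you have located it yourself: the argument you actually carry out uses least supports, so it only covers presheaves $X$ that preserve pullbacks (essentially nominal sets), whereas the theorem is about arbitrary objects of $[\inj,\sets]$. For general $X$ the ``recovery at the level of the colimit equivalence relation'' is not deferred bookkeeping --- it is the entire content of the statement, and your sketch of it (``I would show that any cospan witnessing $\Phi_2(b,c,z)=\Phi_2(b',c',z')$ can be transported into a block-respecting cospan'') is exactly what remains unproved. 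Even in the nominal case your step asserting that the least supports of the two components are $a\sqcup B\sqcup C_1$ and $a\sqcup B\sqcup C_2$ tacitly assumes minimal representatives have been chosen, which should be said.

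Moreover, the specific claim that $\Phi_0,\Phi_1,\Phi_2$ already suffice is doubtful and is not what the paper proves. With only two samples, the injection $f\colon b+c^{(1)}+c^{(2)}\hookrightarrow d$ can send \emph{both} copies $c^{(1)},c^{(2)}$ onto names meeting $\im(f'|_{b'})$ as soon as $|b'|\ge 2$, so the two runs need not separate shared from fresh names; your ``marking'' heuristic relies on supports being intersectable, which is precisely what fails without pullback-preservation. The paper instead uses unboundedly many samples together with a counting argument: given witnesses $f,f'$ for equality of the $n$-fold samples, choose $n>|b|+|b'|$; since the images $f(c_1),\dots,f(c_n)$ are pairwise disjoint, at most $|b'|$ of them can meet $f'(b')$, and symmetrically, so some index $i$ has $f(c_i)$ disjoint from $f'(b')$ and $f'(c'_i)$ disjoint from $f(b)$. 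Both elements are then re-bracketed as $[d_1,[d_2,\cdot]]$ with $d_1=\im(f|_b)\cup\im(f'|_{b'})$, and these representatives are literally equal. To complete your proof you must either adopt this large-$n$ pigeonhole (thereby reproducing the paper's argument for all presheaves, not just nominal sets) or genuinely establish the block-respecting transport for $n=2$ in the non-nominal case, which you should expect to require the extra samples. The alternative route via \Cref{usingmonoid} with result object $1+N$ is likewise only named, not carried out.
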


\begin{proof}
  Let $[b,[c,x]]$ and $[b',[c',x']]$ be representatives of elements of $(TTX)a$ which are equal under all of the $\samp_n$ maps.
  Then, for every $n \in \bN$, writing $c_1,\ldots,c_n$ and $c'_1,\ldots,c'_n$ for tuples of copies of $c$ and $c'$ using fresh names and $x_i = x[c_i/c]$, $x'_i = x'[c'_i/c']$, we have that
  \begin{displaymath}
    (b + c_1 + \ldots + c_n , (x_1,\ldots,x_n))
  \end{displaymath}
  and
  \begin{displaymath}
    (b' + c'_1 + \ldots + c'_n , (x'_1,\ldots,x'_n))
  \end{displaymath}
  are equivalent elements of $(TX)(a)$, meaning that there are injections $f : b + c_1 + \ldots + c_n \hookrightarrow d$ and $f' : b' + c'_1 + \ldots + c'_n \hookrightarrow d$ such that $X(1_a + f)(x_i) = X(1_a + f')(x'_i) \in X(a + d)$ for each each $1 \leq i \leq n$.
  By choosing $n$ sufficiently large, i.e.\ bigger than $|b| + |b'|$, we can ensure that for some $i$ the image of $c_i$ under $f$ is disjoint from the image of $b'$ under $f'$ and also that the image of $c'_i$ under $f'$ is disjoint from the image of $b$ under $f$.
  Let $d_1 = \im(f|_b) \cup \im(f'|_{b'})$ and $d_2 = d \setminus d_1$.
  Then $[b,[c,x]]$ is also represented by $[d_1,[d_2 ,X(1_a + f|_b + f|_{c_1 + \ldots + c_n})(x_i)]]$ and similarly $[b',[c',x']]$ is also represented by $[d_1,[d_2,X(1_a + f'|_b + f'|_{c'_1 + \ldots c'_n})(x'_i)]]$, but by construction these are actually equal.
\end{proof}

It is well known that the full subcategory of $[\inj,\sets]$ whose objects are the pullback-preserving functors is equivalent to the \emph{Schanuel topos}, or topos of \emph{nominal sets}~\cite{murdoch-pitts-a-new-approach-to-abstract-syntax-with-variable-binding}.
The following facts are straightforward to establish.
\begin{enumerate}
\item Every object $TX$ is a nominal set.
\item $T$ preserves monos whose codomain is a nominal set.
\item Every nominal set is sober with respect to $T$.
\end{enumerate}

\begin{corollary}
  The sober objects for the name-generation monad are precisely the nominal sets, and the sobrification monad is idempotent.
\end{corollary}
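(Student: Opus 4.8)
The plan is to derive both assertions from the three facts just listed, together with \Cref{lem:d-idempotent-iff-t-theta-monic}, \Cref{equividemp}, and the characterisation of sobriety by invertibility of the unit of $D$ recorded after \Cref{defsober}.

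First I would prove that $D$ is idempotent. By \Cref{lem:d-idempotent-iff-t-theta-monic} it suffices to show that $T\theta_X \colon TDX \to TTX$ is monic for every object $X$. Now $\theta_X \colon DX \to TX$ is a (regular) monomorphism, being an equalizer, and its codomain $TX$ is a nominal set by fact (1); hence fact (2) applies directly and yields that $T\theta_X$ is monic. Therefore $D$ is idempotent, and by the discussion following \Cref{equividemp} it may legitimately be called the sobrification monad.

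Next I would identify the sober objects. That every nominal set is sober is exactly fact (3), so one inclusion is free. For the converse, suppose $X$ is sober; then, as noted just after \Cref{defsober}, the unit $e \colon X \to DX$ of the monad $D$ is an isomorphism, so $X \cong DX$. By construction $DX$ is the equalizer of the parallel pair $\eta_{TX}, T\eta_X \colon TX \to TTX$, and both $TX$ and $TTX$ are nominal sets by (two applications of) fact (1). It then remains to observe that an equalizer, computed in $[\inj,\sets]$, of a pair of morphisms between nominal sets is again a nominal set — either because nominal sets form a full reflective subcategory of $[\inj,\sets]$ and reflective subcategories are closed under limits, or by a direct check that an equalizer of pullback-preserving functors $\inj \to \sets$ is pullback-preserving (using that pullback squares in $\inj$ are intersections of subsets). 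Hence $X \cong DX$ is a nominal set, so $\mathrm{Sober}(T)$ coincides with the full subcategory of nominal sets.

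The one step that needs care is this last one: a sub-presheaf of a nominal set is \emph{not} in general a nominal set, so it does not suffice merely to note that $DX$ embeds into the nominal set $TX$; one must use that $DX$ is a limit (an equalizer) of nominal sets. Everything else is a routine assembly of the cited statements, the real mathematical content being concentrated in the three enumerated facts about $T$ on $[\inj,\sets]$, which the paper treats as immediate.
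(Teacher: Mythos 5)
Your proof is correct and is essentially the argument the paper intends: the corollary is stated there without an explicit proof, as a direct assembly of facts (1)--(3) with \Cref{lem:d-idempotent-iff-t-theta-monic} and the characterisation of sobriety via the unit of $D$, which is exactly how you proceed. Your extra care on the converse inclusion --- that a mere sub-presheaf of a nominal set need not be nominal, so one must use that $DX$ is an \emph{equalizer} of nominal sets (closure of the Schanuel topos under limits computed in $[\inj,\sets]$) --- is precisely the point the paper leaves implicit, and your handling of it is right.
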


\section{Conclusion}\label{conclusion}

\subsection{Related work}

The notion of \emph{sober space} comes from topology, meaning a topological space where the set of points (set of values) is determined by the set of open subsets (observable predicates).
The idea that abstractly a `sober' object is one for which the fork \eqref{unitfork} is an equalizer has appeared before, e.g.~\cite{taylor-sober-spaces-and-continuations,rosolini-equilogical-spaces-and-filter-spaces,bucalo-rosolini-sobriety-for-equilogical-spaces}.
These works were not focused on probability, but there is some overlap in the examples of interest.
In \cite{taylor-sober-spaces-and-continuations}, it is shown that the passage $\bC \mapsto \Kl(T)_\thunk$ `freely adjoins sobriety' to $\bC$.
Our concern is a little different: we use thunkable morphisms to transform the \emph{objects} of $\bC$ into sober objects in the same category.

The concepts of discardability, copyability, and thunkability have been developed, for example, in \cite{thielecke-continuation-semantics-and-self-adjointness,fuhrmann-direct-models-of-the-computational-lambda-calculus,fuhrmann-varieties-of-effects,kammar-plotkin-algebraic-foundations-for-effect-dependent-optimisations}.
Since most work was on general computational effects, there is another fundamental class of morphisms of interest, the \emph{central morphisms}.
For commutative effects as studied in this paper, every morphism is central.
Thus in general one needs to consider \emph{symmetric premonoidal categories} \cite{levy-power-thielecke-modelling-environments-in-cbv-programming-languages}, rather than symmetric monoidal categories.
We note that much of our framework does not rely on monoidality rather than premonoidality, but we leave to future work the investigation of noncommutative examples.

\subsection{Summary}

We have given natural conditions on a commutative monad, \emph{observationality} (Def.~\ref{observational}) and \emph{$S$-observational} (Def.~\ref{def:s-observational-monad}), for which the deterministic computations are precisely the thunkable ones.
Under mild conditions we showed that these imply that the monad has an associated, idempotent \emph{sobrification} submonad (\S\ref{submonad}, \S\ref{sober}).
We showed that these conditions apply to several examples of interest, including the Giry monad on measurable spaces, and the lower Vietoris monad on topological spaces (\Cref{thmidempotent}).

\paragraph{Acknowledgements} 
We would like to thank Tobias Fritz, Tomáš Gonda, and Dario Stein, as well as Sam Staton and all his research group, for the interesting and fruitful conversations and feedback. 
We also would like to thank the anonymous reviewers for their helpful remarks.

\vskip 10cm

\bibliographystyle{ACM-Reference-Format}
\bibliography{./markov}

\end{document}